\documentclass[aps,prx,twocolumn,superscriptaddress,nofootinbib,floatfix]{revtex4-2}

\usepackage{amsmath,amssymb,mathtools,amsthm}
\usepackage{graphicx}
\usepackage{microtype}
\usepackage[hidelinks]{hyperref}

\newcommand{\Tr}{\operatorname{Tr}}
\newcommand{\rank}{\operatorname{rank}}
\newcommand{\EP}[1]{E_{\mathrm P}^{(#1)}}
\newcommand{\SZ}{S_{\mathbb Z_2}}
\newcommand{\ket}[1]{\lvert #1\rangle}
\newcommand{\bra}[1]{\langle #1\rvert}
\newcommand{\braket}[2]{\langle #1\mid #2\rangle}
\newcommand{\Ccl}{C_{\mathrm{cl}}}

\newtheorem{theorem}{Theorem}
\newtheorem{proposition}{Proposition}
\newtheorem{lemma}{Lemma}
\begin{document}

\title{R\'enyi Entanglement of Purification Is Non-additive}

\author{Amir-Reza Negari}
\affiliation{Institute for Quantum Computing, University of Waterloo, Waterloo, Ontario N2L 3G1, Canada}
\affiliation{Perimeter Institute for Theoretical Physics, Waterloo, Ontario N2L 2Y5, Canada}

\author{Zahra Baghali Khanian}
\affiliation{Perimeter Institute for Theoretical Physics, Waterloo, Ontario N2L 2Y5, Canada}
\affiliation{Institute for Quantum Computing, University of Waterloo, Waterloo, Ontario N2L 3G1, Canada}

\begin{abstract}
Entanglement of purification is a fundamental measure of total correlations
whose additivity remains unresolved.  We study its additivity for classical
states on two qubits at different R\'enyi orders.  For every
$\alpha\in[0,1)$, we prove
nonadditivity within this family, witnessed by two copies of a single state.
We first solve the one-copy optimization exactly for the entire family at
every R\'enyi order.  We then restrict the two-copy optimization to a natural finite set of purifications and exhibit one whose entropy is strictly
below the product value.  In contrast, for $\alpha\in[2,\infty]$ we prove
additivity under tensor products within this family.  The interval
$\alpha\in[1,2)$, including the von Neumann case $\alpha=1$, remains open,
and we conjecture additivity there throughout the same family.

\end{abstract}

\maketitle

\section{Introduction}\label{sec:introduction}

Many operational quantities in quantum information are asymptotic rates.
In communication, compression, and state preparation, many copies may be
processed jointly, so the relevant rate is generally the regularization of
a one-copy quantity $Q$,
\begin{equation*}
 Q^\infty(\rho)
 =
 \lim_{n\to\infty}\frac{1}{n}Q(\rho^{\otimes n}).
\end{equation*}
If $Q$ is additive, then $Q^\infty(\rho)=Q(\rho)$: the asymptotic problem
has a single-letter formula, and no optimization over tensor powers of
exponentially increasing dimension is needed.  Entanglement of formation
is a standard example; its regularization is the entanglement
cost~\cite{BennettEtAl1996,HaydenHorodeckiTerhal2001}.  Shor showed that
its additivity is equivalent to additivity of minimum output entropy and
of the Holevo quantity~\cite{Shor2004}.  Results for several structured
states and channels~\cite{Shor2002,King2002,VidalDurCirac2002} initially
supported these conjectures, but they fail in general.

Nonadditivity of minimum output entropy was first established at high
R\'enyi order.  Werner and Holevo found an explicit violation for
$\alpha>4.79$~\cite{WernerHolevo2002}, and Hayden and Winter extended the
result to every fixed $\alpha>1$ using random channels~\cite{HaydenWinter2008}.  Their
bounds, however, were not uniform as $\alpha\downarrow1$ and did not settle
the von Neumann case.  Hastings later proved nonadditivity at $\alpha=1$
using concentration of measure on high-dimensional random
subspaces~\cite{Hastings2009}.  These constructions
are probabilistic and intrinsically high-dimensional: randomness certifies
that every one-copy output is sufficiently mixed, while an entangled input
to a correlated channel pair produces an anomalously large joint
eigenvalue.  Geometric and free-probabilistic reformulations clarified this
mechanism without making it low-dimensional
~\cite{BrandaoHorodecki2010,AubrunSzarekWerner2011,
BelinschiCollinsNechita2016,AubrunSzarekWerner2010}.  Even recent explicit examples for all
$\alpha>1$ use local dimensions $2^{\widetilde{O}((\alpha-1)^{-1})}$ near
one~\cite{DerksenLovitz2025}.  Thus simple analytic counterexamples remain
most elusive particularly near the von Neumann point.

Determining which quantum-information quantities and entanglement measures
are additive, and at which R\'enyi orders, remains an active research
program~\cite{VidalWerner2002,ChristandlWinter2004,ZhuChenHayashi2010,
WangDuan2017,LeungLovitzWu2026}.
Equally important is the search for simple, low-dimensional constructions
that make the mechanism of nonadditivity transparent.  Recent progress
toward constructive and deterministic counterexamples underscores both
the interest and the difficulty of this goal~\cite{DerksenLovitz2025,
LovitzWu2026}.  In this work we study these questions for entanglement of
purification (EoP): is it nonadditive, and can a simple, low-dimensional
deterministic construction reveal the mechanism?

EoP measures the total correlations in a bipartite quantum state,
including both classical and quantum correlations.  Its
regularized form governs asymptotic correlation preparation and visible
compression~\cite{Terhal2002,Hayashi2006}. Moreover, the regularized R\'enyi entanglement of purification characterizes the strong-converse exponent for visible compression of classical--quantum states~\cite{Khanian2022}.
Despite this operational
significance, whether EoP is additive under tensor products remains open,
even for classical states.  We therefore consider its R\'enyi
generalization, defined in Eq.~\eqref{eq:eopdef} by replacing the von
Neumann entropy with the R\'enyi entropy of order $\alpha$.  At
$\alpha=1$, the von Neumann limit, the evidence for nonadditivity is only
numerical~\cite{ChenWinter2012}, while analytic additivity results at
$\alpha=2$ were previously known only for structured channel
families~\cite{FarajiKhanian2026}.

In this paper, we present the first counterexamples to additivity of the
R\'enyi entanglement of purification. Our result is notably tangible: nonadditivity is
witnessed by just two copies of a classical two-qubit state. Both the exact
one-copy optimum and a strictly improved two-copy purification are obtained
analytically in a fixed, low-dimensional setting, without numerical
optimization, random channels, or asymptotic constructions. Since the
R\'enyi entanglement of purification can be viewed as a constrained
minimum-output entropy~\cite{FarajiKhanian2026}, this mechanism may offer a
route toward comparably explicit, low-dimensional counterexamples for
ordinary minimum-output entropy. Moreover, the violations are not isolated:
For every $0<\alpha<1$, counterexamples form a positive-measure set of classical two-qubit states, so they are not isolated and can be found with nonzero probability by uniform sampling.

We first solve the one-copy optimization exactly for every classical two-qubit state and every R\'enyi order $\alpha$, showing that the canonical purification is optimal.
For a simple one-parameter family, a single sign change in the two-copy
canonical purification yields strict nonadditivity for every
$0\leq\alpha<1$.
In the opposite direction, a $2\times2$ norm-compression argument rules out any collective advantage for $\alpha\ge2$, even after tensoring with an arbitrary finite classical state.  The interval $1\le\alpha<2$ remains open, and two independent indications lead us to conjecture additivity there.

\bigskip

For a bipartite state $\rho_{AB}$ and R\'enyi order $\alpha\in[0,\infty]$, define
\begin{equation}
 \EP{\alpha}(\rho_{AB})
 =\inf_{\ket{\Psi}_{ABA'B'}}
 S_\alpha\!\left(\Tr_{BB'}\ket{\Psi}\!\bra{\Psi}\right),
 \label{eq:eopdef}
\end{equation}
where the infimum is over all purifications of $\rho_{AB}$.  Here $S_\alpha(\sigma)=(1-\alpha)^{-1}\log_2\Tr\sigma^\alpha$ for $\alpha>0$, $\alpha\neq1$, with the continuous von Neumann limit at $\alpha=1$, $S_0(\sigma)=\log_2\rank\sigma$, and $S_\infty(\sigma)=-\log_2\|\sigma\|_\infty$.

\section{EXACT ONE-COPY SOLUTION}
\label{sec:twobitwootters}

For classical two-qubit states, the optimization over purifications can be
solved exactly.  The result is controlled by a single scalar parameter,
playing the role of concurrence in Wootters' two-qubit formula.
Consider the classical state
\begin{equation}
\rho_P
=
\sum_{i,j=0}^{1}p_{ij}\ket{ij}\!\bra{ij},
\qquad
p_{ij}\geq0,
\qquad
\sum_{i,j}p_{ij}=1 .
\label{eq: rho-P}
\end{equation}
Define
\(
\Ccl(P)
:=
2\bigl|
\sqrt{p_{00}p_{11}}
-
\sqrt{p_{01}p_{10}}
\bigr|.
\)
Equivalently, if $M_{ij}:=\sqrt{p_{ij}}$, then
$\Ccl(P)=2|\det M|$.  Associated with $\Ccl(P)$ are the two
probabilities
\begin{equation}
\lambda_\pm(P)
:=
\frac{1\pm\sqrt{1-\Ccl(P)^2}}{2}.
\label{eq: lambda +-}
\end{equation}
As we now show, these are precisely the Schmidt weights of an optimal
purification.

\begin{theorem}
\label{thm:wootters}
For every classical two-qubit state $\rho_P$, the canonical purification
simultaneously minimizes the R\'enyi entropy across $AA':BB'$ for all
R\'enyi orders.  In particular,
\begin{equation}
\EP{\alpha}(\rho_P)
=
S_\alpha\bigl(\lambda_+(P),\lambda_-(P)\bigr),
\qquad
\alpha\in[0,\infty].
\label{eq:wootters-formula}
\end{equation}
At $\alpha=1$, the corresponding von Neumann result is
\(
E_P(\rho_P)
=
S\bigl(\lambda_+(P),\lambda_-(P)\bigr).
\)
\end{theorem}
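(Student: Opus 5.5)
The plan is to prove two things separately: an upper bound from the canonical purification, and a matching lower bound valid for every purification. The lower bound will be a majorization statement, which handles all Rényi orders $\alpha\in[0,\infty]$, including $\alpha=1$, simultaneously.

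\textbf{Achievability.} Write $M_{ij}=\sqrt{p_{ij}}$ and take the canonical purification
\begin{equation*}
\ket{\Psi_0}=\sum_{i,j}M_{ij}\,\ket{i}_A\ket{j}_B\ket{i}_{A'}\ket{j}_{B'}.
\end{equation*}
Across $AA':BB'$ this is $\sum_{ij}M_{ij}\ket{ii}_{AA'}\ket{jj}_{BB'}$. Its Schmidt coefficients are therefore the singular values $s_1\ge s_2$ of the $2\times2$ matrix $M$. Since $s_1^2+s_2^2=\Tr M^TM=1$ and $s_1s_2=|\det M|=\Ccl(P)/2$, the squares $s_1^2,s_2^2$ solve $x^2-x+\Ccl(P)^2/4=0$, so $(s_1^2,s_2^2)=(\lambda_+(P),\lambda_-(P))$. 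This gives $\EP{\alpha}(\rho_P)\le S_\alpha(\lambda_+,\lambda_-)$ for every $\alpha$.

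\textbf{Converse via majorization.} Every purification of $\rho_P$ can be written as
\begin{equation*}
\ket{\Psi}=\sum_{ij}M_{ij}\ket{i}_A\ket{j}_B\ket{\phi_{ij}}_{A'B'}
\end{equation*}
with orthonormal $\ket{\phi_{ij}}$. This follows because all purifications are related by isometries on the purifying system, and the canonical one has this form with $\ket{\phi_{ij}}=\ket{ij}$. Let $\mu_1\ge\mu_2\ge\cdots$ be the Schmidt weights of $\ket\Psi$ across $AA':BB'$. Every $S_\alpha$ with $\alpha\in[0,\infty]$ is Schur-concave. Hence it suffices to show $\mu\prec(\lambda_+,\lambda_-)$. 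Because the target vector has only two nonzero entries, all partial-sum conditions beyond the first hold trivially, and the whole claim reduces to $\mu_1\le\lambda_+$. (For $\alpha=0$, when $\lambda_->0$ we need $\mu$ to have at least two nonzero entries, and this follows from $\mu_1\le\lambda_+<1$.)

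\textbf{Key estimate.} Use the variational characterization
\begin{equation*}
\mu_1=\max_{\ket a,\ket b}\,|\langle a|\langle b|\Psi\rangle|^2,
\end{equation*}
where the maximum runs over unit vectors $\ket a\in AA'$ and $\ket b\in BB'$. Decompose $\ket a=\sum_i\ket i_A\ket{a_i}_{A'}$ and $\ket b=\sum_j\ket j_B\ket{b_j}_{B'}$. Then
\begin{equation*}
\langle a,b|\Psi\rangle=\sum_{ij}M_{ij}\,\langle a_i\otimes b_j|\phi_{ij}\rangle .
\end{equation*}
Set $x_i=\|a_i\|$ and $y_j=\|b_j\|$, so that $\sum_i x_i^2=\sum_j y_j^2=1$. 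By Cauchy--Schwarz, $|\langle a_i\otimes b_j|\phi_{ij}\rangle|\le x_iy_j$. Since every $M_{ij}\ge0$, the triangle inequality then gives
\begin{equation*}
|\langle a,b|\Psi\rangle|\le x^TMy\le s_1 .
\end{equation*}
Hence $\mu_1\le s_1^2=\lambda_+$. Combined with achievability, this proves Eq.~\eqref{eq:wootters-formula} for all $\alpha$, and the von Neumann statement is the case $\alpha=1$.

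\textbf{Main obstacle.} The step needing the most care is the reduction to $\mu_1\le\lambda_+$: for general purifications $\mu$ may have many nonzero entries, so one must verify that majorization by a two-entry vector needs only the top-entry bound. The nonnegativity of $M$ is what allows the phases of the overlaps $\langle a_i\otimes b_j|\phi_{ij}\rangle$ to be discarded without loss. I would double-check that the convention of orthonormal $\ket{\phi_{ij}}$ covers purifications with arbitrary, possibly infinite-dimensional, ancillas; it should, because an isometry on $A'B'$ preserves the form.
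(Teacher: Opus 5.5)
Your proposal is correct and follows essentially the same route as the paper. The canonical purification gives the upper bound through its $2\times2$ spectrum (you use the singular values of $M$ with $s_1s_2=|\det M|$, while the paper takes the trace and determinant of the reduced marginal, which is the same computation). The converse bounds the top Schmidt weight by Cauchy--Schwarz and the triangle inequality using $M_{ij}\ge0$, then upgrades $\mu_1\le\lambda_+$ to full majorization because the target vector has only two nonzero entries, and your treatment of the $\alpha=0$ endpoint via $\mu_1\le\lambda_+<1$ is a correct, slightly more explicit version of the paper's rank remark.
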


\begin{proof}
We first determine the Schmidt spectrum of the canonical purification
across the cut $L:R=AA':BB'$.  Introduce
\(
\ket{\bar i}_L:=\ket{i}_A\ket{i}_{A'}
\)
and
\(
\ket{\bar j}_R:=\ket{j}_B\ket{j}_{B'}.
\)
The canonical purification is
\begin{equation}
\ket{\Psi_{\rm can}}
=
\sum_{i,j=0}^{1}\sqrt{p_{ij}}\,
\ket{\bar i}_L\ket{\bar j}_R.
\label{eq:canonical-purification}
\end{equation}

Let
\(
r_0:=p_{00}+p_{01},
\)
\(
r_1:=p_{10}+p_{11},
\)
and
\(
c:=\sqrt{p_{00}p_{10}}+\sqrt{p_{01}p_{11}}.
\)
Tracing out $R$ gives
\(\sigma_{\rm can}=r_0\ket{\bar0}\!\bra{\bar0}
+r_1\ket{\bar1}\!\bra{\bar1}
+c(\ket{\bar0}\!\bra{\bar1}+\ket{\bar1}\!\bra{\bar0})\),
which has rank at most two.  Let
$\lambda_+\geq\lambda_-\geq0$ be its two eigenvalues.  Their sum is one,
and their product is
\(\det\sigma_{\rm can}=r_0r_1-c^2=\Ccl(P)^2/4\).  Hence
\(
\lambda_\pm
=
\frac{1}{2}\bigl(1\pm\sqrt{1-\Ccl(P)^2}\bigr),
\)
i.e., $\lambda_\pm=\lambda_\pm(P)$ of
Eq.~\eqref{eq: lambda +-}.  The right-hand side of
Eq.~\eqref{eq:wootters-formula} is therefore the entropy of the canonical
purification.  Since $\EP{\alpha}$ is an infimum over all purifications,
\(\EP{\alpha}(\rho_P)\leq S_\alpha(\lambda_+,\lambda_-)\).  It remains to
prove the reverse inequality.

We now compare this spectrum with that of an arbitrary purification of
$\rho_P$.  Every such purification can be written as
\begin{equation}
\ket{\Psi}
=
\sum_{i,j=0}^{1}
\sqrt{p_{ij}}\,
\ket{i}_A\ket{j}_B\ket{\phi_{ij}}_{A'B'},
\label{eq:general-purification}
\end{equation}
where the $\ket{\phi_{ij}}$ are orthonormal on the support of $P$.  Let
$q_1\geq q_2\geq\cdots$ denote its Schmidt spectrum across the same cut
$L:R$.

The largest Schmidt weight has the variational characterization
\begin{equation}
\sqrt{q_1}
=
\max_{\|x\|=\|y\|=1}
\left|
\bra{x}_L\bra{y}_R\ket{\Psi}
\right|,
\label{eq:variational-q1}
\end{equation}
and Appendix~\ref{app sec: q1 < lambda+} shows that any maximizing pair
$\ket{x},\ket{y}$ can be converted into a product state for the canonical
purification with at least the same overlap.  Comparing the two overlaps
yields
\begin{equation}
q_1\leq\lambda_+ .
\label{eq: q1<lambda+}
\end{equation}

We say that a probability vector majorizes another if, for every $k$, the
sum of its $k$ largest entries is at least the sum of the $k$ largest
entries of the other vector.  Equation~\eqref{eq: q1<lambda+} gives this
inequality for $k=1$.  For $k\geq2$, the canonical spectrum already sums
to $\lambda_++\lambda_-=1$, while $\sum_{j=1}^kq_j\leq1$.  Therefore
\begin{equation}
(\lambda_+,\lambda_-,0,\ldots)
\succ
(q_1,q_2,\ldots).
\label{eq: major}
\end{equation}

For every finite $\alpha>0$, the R\'enyi entropy is Schur-concave, so
Eq.~\eqref{eq: major} implies
\begin{equation}
S_\alpha\!\left(
\Tr_R\ket{\Psi}\!\bra{\Psi}
\right)
\geq
S_\alpha(\lambda_+,\lambda_-).
\label{eq:renyi-majorization}
\end{equation}
The von Neumann case follows in the same way. At $\alpha=0$,
Eq.~\eqref{eq: major} implies that no purification can have smaller
Schmidt rank than the canonical one. At $\alpha=\infty$, the result
follows directly from Eq.~\eqref{eq: q1<lambda+}, since
$S_\infty=-\log_2 q_1$.

Together with the upper bound furnished by the canonical purification,
these inequalities prove equality for every $\alpha\in[0,\infty]$.
\end{proof}

The rank-two argument above generalizes to binary classical--quantum states of the form
\begin{equation}
\rho_{AB}
=
p\ket{0}\!\bra{0}_A\otimes\rho_0
+
(1-p)\ket{1}\!\bra{1}_A\otimes\rho_1,
\label{eq:cqstate}
\end{equation}
where $\rho_0$ and $\rho_1$ are arbitrary states on $B$. As shown in
Appendix~\ref{app:cq}, the purification optimization can again be solved
exactly for every R\'enyi order. Uhlmann's theorem yields a rank-two
purification saturating the fidelity between $\rho_0$ and $\rho_1$, and
the same largest-eigenvalue argument as in Theorem~\ref{thm:wootters}
shows that its reduced spectrum majorizes that of any competing
purification. Consequently, the Uhlmann purification is globally optimal
for all $\alpha\in[0,\infty]$.

\section{NONADDITIVITY BELOW ONE}

Theorem~\ref{thm:wootters} establishes that the positive canonical purification is globally optimal for every classical two-qubit state.  This is the purification underlying reflected entropy~\cite{DuttaFaulkner2021}, but it is not the only natural purification of a classical state.  Because tracing the purifying registers removes relative phases between distinct classical outcomes, every sign table $s_{ij}=\pm1$ gives a purification of the same state:
\begin{equation*}
 \ket{P;s}
 =\sum_{i,j}s_{ij}\sqrt{p_{ij}}\,
 \ket{ij}_{AB}\ket{ij}_{A'B'}.
\end{equation*}
Some choices are trivially equivalent.  If $s_{ij}=u_i v_j$, local diagonal unitaries on $AA'$ and $BB'$ remove the signs and leave the entanglement unchanged.  General sign patterns do not factorize in this way: they change the singular values across $AA':BB'$ and can therefore change every R\'enyi entropy.  This makes it natural to minimize over the signed canonical family and define the $\mathbb Z_2$ reflected entropy
\begin{equation}
 \SZ^{(\alpha)}(P)
 =\min_s S_\alpha\!\left(\Tr_{BB'}\ket{P;s}\!\bra{P;s}\right).
 \label{eq:z2def}
\end{equation}
This restricted optimization is an upper bound on EoP,
\begin{equation}
 \EP{\alpha}(\rho_{AB})\le \SZ^{(\alpha)}(P).
 \label{eq:z2witness}
\end{equation}
For one copy of a classical two-qubit state, Theorem~\ref{thm:wootters} forces equality in Eq.~\eqref{eq:z2witness}, with the positive signs optimal.  Two copies instead produce a $4\times4$ amplitude table with many more inequivalent sign patterns, and Theorem~\ref{thm:wootters} no longer fixes their minimum.  Consequently,
$\SZ^{(\alpha)}(P^{\otimes2})<2\EP{\alpha}(\rho_{AB})$ immediately proves nonadditivity of EoP.

At two copies there are $512$ inequivalent sign patterns.  We searched all of them for $5\times10^4$ uniformly sampled classical two-qubit states.  Figure~\ref{fig:z2fraction} shows the fraction for which a sign choice beats the product value and thereby proves nonadditivity.  The data suggest that the violating set has measure zero at $\alpha=0$ and $\alpha=1$, but finite volume for every $0<\alpha<1$: its fraction vanishes continuously as $\alpha\to1^-$ and discontinuously as $\alpha\to0^+$.  The next theorem replaces this numerical evidence with an explicit classical two-qubit construction proving nonadditivity for every $0\le\alpha<1$.

\begin{figure}[t]
 \includegraphics[width=\columnwidth]{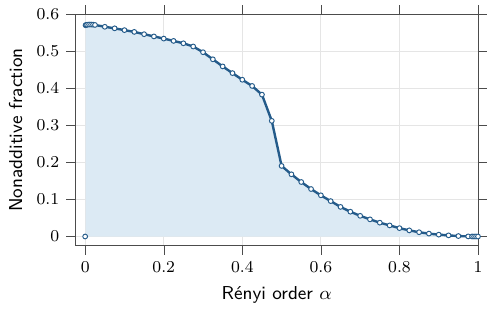}
 \caption{\label{fig:z2fraction}Fraction of $5\times10^4$ uniformly sampled classical two-qubit states for which our signed-canonical witness proves nonadditivity, obtained by exact enumeration of all $512$ inequivalent two-copy sign patterns.}
\end{figure}

\begin{theorem}
\label{thm:nonadd}
For every $0\le\alpha<1$, there exists a classical two-qubit state $\rho_{AB}$ such that
\begin{equation}
 \EP{\alpha}(\rho_{AB}^{\otimes2})
 <2\EP{\alpha}(\rho_{AB}).
 \label{eq:nonadditiveallsubunit}
\end{equation}
\end{theorem}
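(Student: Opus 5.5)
The plan is to exhibit, for each $\alpha<1$, an explicit signed two-copy purification $\ket{P^{\otimes2};s}$ whose reduced spectrum has strictly smaller R\'enyi entropy than $2S_\alpha(\lambda_+(P),\lambda_-(P))$. By Theorem~\ref{thm:wootters}, the latter equals $2\EP{\alpha}(\rho_P)$. Since $\EP{\alpha}(\rho^{\otimes2})\le\SZ^{(\alpha)}(P^{\otimes2})$ by Eq.~\eqref{eq:z2witness}, this suffices. The state is allowed to depend on $\alpha$, so I would work with a one-parameter family and take the parameter close to a degenerate limit.

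\emph{Choice of family.} I would take a symmetric table with one vanishing entry, for instance
\[
M=\begin{pmatrix}\sqrt{x}&\sqrt{q}\\ \sqrt{q}&0\end{pmatrix},\qquad x+2q=1 .
\]
Then $\Ccl=2q$ and $\lambda_-\approx q^2$ as $q\to0$. For two copies, the signed amplitude matrix is $N_s=(s_{(ik),(jl)}M_{ij}M_{kl})$, a $4\times4$ matrix with rows $(i,k)$ and columns $(j,l)$. Its squared singular values form the reduced spectrum across $AA':BB'$. With all signs positive, $N_s=M\otimes M$ has spectrum $\lambda_+^2,\lambda_+\lambda_-,\lambda_+\lambda_-,\lambda_-^2$.

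\emph{The single sign flip.} The mechanism I expect is the following. The two middle eigenvalues $\lambda_+\lambda_-\sim\epsilon$ arise from the two "cross" sectors (one copy in the large Schmidt direction, the other in the small one). A non-factorizable sign flip on one cross entry, e.g.\ $s_{(01),(10)}=-1$ or the entry coupling the two off-diagonal blocks, should make these two sectors interfere. The weight of order $2\epsilon$ then concentrates on a single singular value, leaving the other of higher order. I would first find such a flip by checking $\alpha=0$: the signed matrix should have rank $3<4$, i.e.\ $\det N_s=0$ identically in $x,q$. This is a short symbolic computation given the zero pattern of $M\otimes M$. Then I would compute $\Tr(N_sN_s^\dagger)^\alpha$ perturbatively in $\epsilon$.

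\emph{Comparison.} For the product spectrum,
\[
\Tr\sigma^\alpha=\lambda_+^{2\alpha}+2(\lambda_+\lambda_-)^\alpha+\lambda_-^{2\alpha}=1+2\epsilon^\alpha+o(\epsilon^\alpha).
\]
For the signed purification, the trace is preserved (total weight $1$), and I expect $1+(2\epsilon)^\alpha+o(\epsilon^\alpha)$ with the remaining eigenvalues of order $\epsilon^2$ or smaller. Since $2^\alpha<2$ for $\alpha<1$ and $S_\alpha$ is increasing in $\Tr\sigma^\alpha$ when $\alpha<1$, the signed entropy is strictly smaller for $\epsilon$ small enough. This yields Eq.~\eqref{eq:nonadditiveallsubunit}. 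The case $\alpha=0$ is just the rank statement $\log_23<2$.

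\emph{Main obstacle.} The hardest step is controlling the perturbation uniformly enough. Shrinking $\lambda_-$ also moves the largest eigenvalue, $\lambda_+^{2\alpha}=1-O(\epsilon)$ (note $\lambda_+^{2\alpha}\approx 1-2\alpha\epsilon$). For $\alpha<1$ these $O(\epsilon)$ corrections are dominated by the $\epsilon^\alpha$ terms, which is exactly why the argument fails at $\alpha\ge1$. I would verify that the signed matrix's leading singular value also satisfies $1-O(\epsilon)$, and that the merged middle eigenvalue is $2\epsilon(1+o(1))$ rather than something smaller. If the perturbative bookkeeping becomes messy, I would instead compute the characteristic polynomial of $N_sN_s^\dagger$ exactly (it should factor thanks to the symmetry $i\leftrightarrow k$) and read off closed-form eigenvalues.
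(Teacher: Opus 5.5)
\textbf{There is a genuine gap: your witness does not exist in the family you chose.} For small $q$, no sign pattern beats the product value at any $0\le\alpha<1$.

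Write $a=\sqrt x$ and $b=\sqrt q$. Because $M_{11}=0$, the two-copy table $N_{(ik),(jl)}=M_{ij}M_{kl}$ has row $(11)$ supported only in column $(00)$, and column $(11)$ supported only in row $(00)$. The block on rows and columns $\{(01),(10)\}$ is anti-diagonal with entries $b^2$. Only one permutation contributes to the determinant, so $\det N_s=\pm b^8=\pm q^4\neq0$ for every sign table. The rank-three check at $\alpha=0$ therefore cannot succeed.

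For $0<\alpha<1$, note that the nine nonzero entries form a connected bipartite graph with two independent cycles. Up to local sign gauges there are therefore only four patterns: unsigned, one of the cross entries $(01),(10)$ or $(10),(01)$ flipped, or both flipped. Eliminate the $(00),(00)$ entry by a Schur complement. The coupling blocks are $O(\sqrt q)$, so this fixes the small singular values to leading order. The remaining cross blocks are, up to sign,
\[
q\begin{pmatrix}1&0\\0&1\end{pmatrix},\qquad
q\begin{pmatrix}1&2\\0&1\end{pmatrix},\qquad
q\begin{pmatrix}1&2\\2&1\end{pmatrix},
\]
with squared singular values $(q^2,q^2)$, $\bigl((3+2\sqrt2)q^2,(3-2\sqrt2)q^2\bigr)$ and $(9q^2,q^2)$. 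By AM--GM, both nontrivial classes increase the coefficient of $q^{2\alpha}$ in the trace power. The top eigenvalue only moves at order $q^2=o(q^{2\alpha})$. So every flip raises the entropy. For instance, at $q=0.01$ the single flip gives $\Tr\sigma^{1/2}\approx1.028$, against $1+2q=1.02$ for the product.

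\textbf{The step that fails} is the assumption that trace preservation keeps the cross-sector weight at $2\epsilon$ and merely lets a flip redistribute it. A flip acts in the computational basis, not the Schmidt basis. For an entrywise nonnegative table it can only lower the top singular value, by the triangle-inequality argument of Appendix~A, so weight leaks out of the top eigenvalue. Combined with subadditivity of $t\mapsto t^\alpha$, this shows that every signed two-copy marginal obeys $\Tr\sigma^\alpha\ge\lambda_+^{2\alpha}+(1-\lambda_+^2)^\alpha=1+(2\epsilon)^\alpha+o(\epsilon^\alpha)$. Your target value is thus the best conceivable outcome, not what a generic flip delivers. Even where a rank-one merger does occur, the gain is limited. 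For example, take $M\propto\begin{pmatrix}1&x\\y&3xy\end{pmatrix}$ with both cross entries flipped: the merged weight is $4\epsilon$, which wins only if $4^\alpha<2$, i.e.\ $\alpha<1/2$.

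\textbf{How the paper proceeds instead.} It avoids the near-pure limit altogether and expands around the full-rank point $p_{00}=0$, $p_{01}=p_{10}=p_{11}=1/3$. It flips only the small all-zero two-copy amplitude $r^2/D$ and computes the exact change $\Delta_r$; the diagonal cancels, leaving pure coherences. Concavity of $X\mapsto\Tr X^\alpha$ then gives $\Tr\widetilde\Sigma_r^\alpha\le\Tr\Sigma_r^\alpha+\alpha\Tr(\Sigma_r^{\alpha-1}\Delta_r)$. The last term equals $-\frac{4\alpha r^2}{D^2}v_r(v_r-2ru_r)$, which is negative for small $r$. The endpoint $\alpha=0$ is a separate exact computation at $r=\sqrt2-1$. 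There $\ket{00}-\ket{01}-\ket{10}+\ket{11}$ is orthogonal to all four conditional kets, so the rank drops to three.
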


\begin{proof}
For $0\le r\le1$, set $D=3+r^2$ and consider the one-parameter family
\begin{equation}
 \rho_r^{AB}=\frac1D\bigl(r^2\ket{00}\!\bra{00}
 +\ket{01}\!\bra{01}+\ket{10}\!\bra{10}+\ket{11}\!\bra{11}\bigr).
 \label{eq:rfamily}
\end{equation}
Define $\ket0_L=\ket0_A\ket0_{A'}$, $\ket1_L=\ket1_A\ket1_{A'}$, and similarly on $R=BB'$.  The positive canonical purification of $\rho_r$ is
\begin{equation}
 \ket{\psi_r}=\frac{r\ket0_L\ket0_R+\ket0_L\ket1_R
 +\ket1_L\ket0_R+\ket1_L\ket1_R}{\sqrt D}.
 \label{eq:psir}
\end{equation}
For every fixed $0<\alpha<1$, we take $r>0$ sufficiently small; at $\alpha=0$, we take $r=\sqrt2-1$.  
We obtain the witness by changing only the all-zero coefficient of $\ket{\psi_r}^{\otimes2}$:
\begin{align}
 \ket{\widetilde\Psi_r}
 ={}&\ket{\psi_r}^{\otimes2}
 -\frac{2r^2}{D}
 \ket{00}_{A_1B_1}\ket{00}_{A'_1B'_1}\nonumber\\[-0.2em]
 &\hspace{5.5em}\otimes
 \ket{00}_{A_2B_2}\ket{00}_{A'_2B'_2}.
 \label{eq:signedpurification}
\end{align}

Let \(\sigma_r\) be the one-copy left marginal,
and let
\begin{equation}
 \Sigma_r:=\sigma_r\otimes\sigma_r,
 \qquad
 \widetilde\Sigma_r
 :=\operatorname{Tr}_{R_1R_2}
 \ket{\widetilde\Psi_r}\!\bra{\widetilde\Psi_r}
\end{equation}
be, respectively, the product marginal and the marginal of the signed
two-copy purification.

Fix \(0<\alpha<1\).  For all sufficiently small \(r>0\), both
\(\Sigma_r\) and \(\widetilde\Sigma_r\) are positive definite.  For $0\leq t\leq1$ define
\begin{equation}
 X_r(t):=(1-t)\Sigma_r+t\widetilde\Sigma_r,
 \qquad
 \phi_r(t):=\operatorname{Tr}X_r(t)^\alpha.
\end{equation}
Concavity of \(X\mapsto\operatorname{Tr}X^\alpha\) in interval \(0<\alpha<1\) implies concavity for $t \mapsto\phi_r(t)$.
Hence its graph lies below its tangent line at $t=0$:
\begin{equation}
 \phi_r(t)\leq \phi_r(0)+t\phi_r'(0),
 \qquad 0\leq t\leq1.
 \label{eq:tangent-general}
\end{equation}
Evaluating at $t=1$ gives
\begin{equation}
 \Tr\widetilde\Sigma_r^\alpha
 \leq
 \Tr\Sigma_r^\alpha+\phi_r'(0).
 \label{eq:tangent-one}
\end{equation}
In Appendix~\ref{app:nonadd}, we show that for every sufficiently small $r>0$ we have $\phi_r'(0)<0$. This implies $S_\alpha(\widetilde\Sigma_r)<S_\alpha(\Sigma_r)$.

Theorem~\ref{thm:wootters} and the additivity of the R\'enyi entropy on
tensor-product states now give $ \EP{\alpha}(\rho_r^{\otimes2})
 \leq S_\alpha(\widetilde\Sigma_r)
 <S_\alpha(\Sigma_r)
 =2S_\alpha(\sigma_r)
 =2\EP{\alpha}(\rho_r)$.
This proves the strict inequality for every fixed \(0<\alpha<1\) and
all sufficiently small \(r>0\).

For \(\alpha=0\), choose \(r=\sqrt2-1\).  The calculation in
Appendix~\ref{app:nonadd} shows that the signed two-copy coefficient
matrix has rank three.  On the other hand,
Theorem~\ref{thm:wootters} gives \(\EP{0}(\rho_r)=\log_2 2=1\).
Hence $\EP{0}(\rho_r^{\otimes2})\leq\log_2 3<2=2\EP{0}(\rho_r)$

\end{proof}

\section{STRONG ADDITIVITY FROM ORDER TWO}
Theorem~\ref{thm:nonadd} proves existence at every subunit order, yet Fig.~\ref{fig:z2fraction} shows that violations detected by our witness become vanishingly rare near $\alpha=1$.  This leaves two possibilities: the signed-canonical search may cease to faithfully represent the full purification optimization, or the nonadditive set of classical two-qubit states may itself shrink to measure zero and become correspondingly difficult to find.  We therefore turn the question around and ask what can be proved at higher orders.  For $\alpha\geq2$, the R\'enyi entanglement of purification is additive for all classical two-qubit states.

\begin{theorem}
\label{thm:additivity}
For every classical two-qubit state $\rho_{AB}$ and any finite-dimensional classical state $\sigma_{CD}$, and every $2\le\alpha\le\infty$,
\begin{equation*}
\EP{\alpha}(\rho_{AB} \otimes \sigma_{CD})=\EP{\alpha}(\rho_{AB} )+\EP{\alpha}(\sigma_{CD}).
\end{equation*}
\end{theorem}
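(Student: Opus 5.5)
Subadditivity is the easy direction. Tensoring an optimal purification of $\rho_{AB}$ with an optimal purification of $\sigma_{CD}$ (or near-optimal ones, if the infimum is not attained) gives a purification of $\rho_{AB}\otimes\sigma_{CD}$. Its left marginal is a tensor product, and $S_\alpha$ is additive on tensor products, so $\EP{\alpha}(\rho\otimes\sigma)\le\EP{\alpha}(\rho)+\EP{\alpha}(\sigma)$.

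For the reverse inequality with $2\le\alpha<\infty$, I rewrite everything in Schatten norms. Since $\alpha>1$, $S_\alpha(X)=\frac{\alpha}{1-\alpha}\log_2\|X\|_\alpha$, so the claim is
\[
\|\Sigma\|_\alpha\le \|\sigma_{\rm can}\|_\alpha\, N_\sigma
\]
for the left marginal $\Sigma$ of an arbitrary purification of $\rho\otimes\sigma$. Here $\sigma_{\rm can}$ is the canonical one-copy marginal of Theorem~\ref{thm:wootters}, and $N_\sigma:=\sup\|\tau\|_\alpha$ over left marginals $\tau$ of purifications of $\sigma_{CD}$, so that $N_\sigma=2^{(1-\alpha)\EP{\alpha}(\sigma)/\alpha}$. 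The case $\alpha=\infty$ then follows by letting $\alpha\to\infty$, or by running the same argument with operator norms.

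Because $\rho_{AB}$ is classical, as in Eq.~\eqref{eq:general-purification} I write an arbitrary purification as
\[
\ket\Psi=\sum_{ij}\sqrt{p_{ij}}\,\ket i_A\ket j_B\ket{\Phi_{ij}}.
\]
Here each $\ket{\Phi_{ij}}$ lives on $CD$ plus all purifying registers, and is itself a purification of $\sigma_{CD}$ (up to relabelling which ancillas sit on the left and on the right). Let $M_{ij}$ be the coefficient matrix of $\ket{\Phi_{ij}}$ across the cut $L:R$. Then $\Sigma=MM^*$ is a $2\times2$ block positive matrix indexed by $i\in\{0,1\}$, with blocks
\[
\Sigma_{ii'}=\sum_j\sqrt{p_{ij}p_{i'j}}\,M_{ij}M_{i'j}^*.
\]
By definition of $N_\sigma$, $\|M_{ij}M_{ij}^*\|_\alpha\le N_\sigma$, equivalently $\|M_{ij}\|_{2\alpha}^2\le N_\sigma$. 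The triangle inequality and Hölder's inequality $\|XY^*\|_\alpha\le\|X\|_{2\alpha}\|Y\|_{2\alpha}$ then give
\[
\|\Sigma_{00}\|_\alpha\le r_0N_\sigma,\qquad \|\Sigma_{11}\|_\alpha\le r_1N_\sigma,\qquad \|\Sigma_{01}\|_\alpha\le cN_\sigma,
\]
with $r_0,r_1,c$ exactly the entries of $\sigma_{\rm can}$ from the proof of Theorem~\ref{thm:wootters}.

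The key step is the $2\times2$ norm-compression inequality of King for positive block matrices, valid for $p\ge2$:
\[
\|\Sigma\|_p\le\left\|\begin{pmatrix}\|\Sigma_{00}\|_p&\|\Sigma_{01}\|_p\\ \|\Sigma_{10}\|_p&\|\Sigma_{11}\|_p\end{pmatrix}\right\|_p .
\]
This is precisely where $\alpha\ge2$ enters. Next I need monotonicity: the Schatten $\alpha$-norm of an entrywise nonnegative symmetric $2\times2$ matrix is nondecreasing in each entry. I would check this directly from the explicit eigenvalues $\frac{a+b}{2}\pm\sqrt{(\frac{a-b}{2})^2+y^2}$, using convexity of $x\mapsto x^\alpha$. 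With this, the compressed matrix can be bounded by $N_\sigma\,\sigma_{\rm can}$, giving $\|\Sigma\|_\alpha\le N_\sigma\|\sigma_{\rm can}\|_\alpha$. Theorem~\ref{thm:wootters} identifies $\|\sigma_{\rm can}\|_\alpha$ with $2^{(1-\alpha)\EP{\alpha}(\rho)/\alpha}$, and taking logarithms yields the lower bound.

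I expect the main obstacle to be verifying that King's inequality applies in exactly this form and direction: positive semidefinite $2\times2$ block matrices, exponent $p\ge2$, upper bound. The bookkeeping of which ancilla registers belong to $L$ and to $R$ inside each $\ket{\Phi_{ij}}$ also needs care, so that each block really is a purification of $\sigma_{CD}$ and the bound $\|M_{ij}\|_{2\alpha}^2\le N_\sigma$ is legitimate. Classicality of $\sigma_{CD}$ appears not to be needed beyond this point.
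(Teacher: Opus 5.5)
Your proposal is correct and is essentially the paper's argument. The paper also applies King's inequality at exponent $\alpha=p/2\ge2$ to the positive matrix $MM^\dagger$, bounds the compressed entries by the triangle and H\"older inequalities, and finishes with entrywise monotonicity of $2\times2$ Schatten norms; it only packages these steps as a general block norm-compression lemma for $\|M\|_p$ with $p=2\alpha\ge4$ before substituting $\|N_{ij}\|_p\le v_p(Q)$. Your remark that classicality of $\sigma_{CD}$ is never used is also right, and for the monotonicity step it is enough to interpolate through positive semidefinite matrices (raise the diagonal first, then the off-diagonal), which is how the paper's entrywise lemma is set up.
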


\begin{proof}
The proof turns the purification problem into a norm inequality.  Let $M(\rho^{AB},X)$ be the Schmidt coefficient operator across $AA':BB'$, as defined in  Appendix~\ref{app:additivity}.  For $\alpha>1$, minimizing $S_\alpha(M M^\dagger)$ is equivalent to maximizing the Schatten $p$ norm of $M(\rho^{AB},X)$, where here and only here $p=2\alpha$.  Grouping an arbitrary purification of $\rho_{AB}\otimes \sigma_{CD}$ by the two indices of the state $\rho_{AB}$ produces a $2\times2$ block operator.  After removing the factor $\sqrt{p_{ij}}$, each block is itself a valid purification amplitude for $\sigma_{CD}$.  The entire ancillary optimization is then controlled by
\begin{equation}
 \left\|\begin{pmatrix}X&Y\\ Z&W\end{pmatrix}\right\|_p
 \le
 \left\|\begin{pmatrix}
 \|X\|_p&\|Y\|_p\\
 \|Z\|_p&\|W\|_p
 \end{pmatrix}\right\|_p,
 \qquad p\ge4.
 \label{eq:normcompression}
\end{equation}
This compression replaces every block by its norm and leaves only the scalar square-root table of the classical two-qubit state.  Product purifications saturate the resulting bound, so the optimized norm is multiplicative.
The threshold $\alpha=2$ comes directly from the proof: King's positive $2\times2$ block inequality~\cite{King2003} is applied to $MM^\dagger$ at exponent $p/2$, and the required direction begins at $p/2=2$.  Appendix~\ref{app:additivity} carries out the arbitrary-ancilla reduction, proves Eq.~\eqref{eq:normcompression} for every real $p\ge4$, and treats $\alpha=\infty$.
\end{proof}

The threshold $\alpha=2$ may therefore reflect a limitation of the
available norm inequality rather than a change in the underlying
phenomenon.  Audenaert conjectured Eq.~\eqref{eq:normcompression} for the
larger range $p\geq2$~\cite{Audenaert2008}.  Since $p=2\alpha$, this would
extend our norm-compression proof to every $\alpha>1$.  Continuity in the
R\'enyi order then extends strong additivity to the von Neumann endpoint
$\alpha=1$.  Approaching
from below, the gain of our sign witness vanishes exactly as
$\alpha\uparrow1$.  Thus the nonadditive witness ceases to work precisely
where Audenaert's conjecture would extend additivity from above, motivating our
conjecture that classical two-qubit EoP is additive, and hence
single-letter, for all $\alpha\geq1$.

\section{Outlook}
Our results single out the von Neumann point as a sharp boundary problem for classical two-qubit states.  Approaching from below, the nonadditive gain of our witness vanishes exactly as $\alpha\uparrow1$.  From above, strong additivity holds for $\alpha\geq2$, while Audenaert's norm-compression conjecture would extend it to every $\alpha>1$ and, by continuity, to $\alpha=1$ itself.  Thus the witness from below stops precisely at the point to which Audenaert's conjecture would extend additivity from above.  This coincidence motivates our conjecture that the entire family is additive for all $\alpha\geq1$.  Proving the von Neumann case is the central remaining step, whether through norm compression and continuity or through a direct entropic inequality.

\begin{figure}[t]
\centering
 \includegraphics[width=\columnwidth]{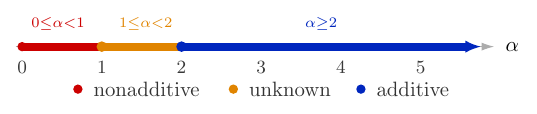}
\caption{R\'enyi-order phase diagram for classical two-qubit states.  Red denotes
orders at which classical nonadditive examples exist. Blue points denote
orders at which additivity holds for every classical two-qubit state.  Orange denotes orders
not settled by the present results.}
\label{fig:alphaphase}
\end{figure}

If this conjecture is correct, a von Neumann counterexample must leave the classical two-qubit family.  The binary classical--quantum states solved in Appendix~\ref{app:cq} are a natural next testing ground: they allow a fully quantum subsystem while retaining an exact one-copy formula at every R\'enyi order.  For larger classical states, our $\mathbb Z_2$ signs are the real corners of the scalar $U(1)$ phase family.  For classical--quantum states, one can instead restrict the unitary freedom on the purifying systems, for example to controlled or low-depth circuits.  Continuous phases and restricted unitaries thus form a natural hierarchy beyond the present signed witness.

These extensions connect EoP to two broader settings.  The positive canonical purification defines reflected entropy, an important quantity in holography that, like EoP, is closely tied to the entanglement-wedge cross-section~\cite{DuttaFaulkner2021,UmemotoTakayanagi2018}.  The $\mathbb Z_2$ quantity introduced here and its $U(1)$ extension may provide phase-sensitive analogues with tensor-network or geometric interpretations.  Since R\'enyi EoP is also a constrained minimum-output-entropy problem~\cite{FarajiKhanian2026}, these small witnesses may help produce simpler deterministic counterexamples for ordinary minimum-output entropy~\cite{Hastings2009,DerksenLovitz2025}.

\emph{Acknowledgments.}
The authors thank Farzin Salek for helpful discussions.  A.-R.N. acknowledges support from the Natural Sciences and Engineering Research Council of Canada (NSERC) under Discovery Grant No.~RGPIN-2018-04380 and from an Ontario Early Researcher Award.  Z.B.K. acknowledges support from the Ada Lovelace Postdoctoral Fellowship at Perimeter Institute for Theoretical Physics.  Research at Perimeter Institute is supported in part by the Government of Canada through the Department of Innovation, Science and Economic Development and by the Province of Ontario through the Ministry of Colleges and Universities.

\emph{AI disclosure.}
The authors used a centaur (human-AI) workflow. The conceptual ideas in this work, including the study of classical two-qubit states and the construction of the witness, are entirely those of the authors. Agentic AI assisted with detailed calculations, all of which the authors independently verified. The authors take full responsibility for the work.

\bibliography{references}

\clearpage
\onecolumngrid
\appendix
\setcounter{secnumdepth}{2}
\renewcommand{\thesection}{\Alph{section}}
\renewcommand{\thesubsection}{\Alph{section}.\arabic{subsection}}
\makeatletter
\@addtoreset{equation}{section}
\renewcommand{\p@subsection}{}
\makeatother
\renewcommand{\theequation}{\Alph{section}\arabic{equation}}

\section{Largest eigenvalue of an arbitrary purification}\label{app sec: q1 < lambda+}

Take a completely arbitrary purification of the classical state,
\begin{equation}
\ket{\Psi}
=\sum_{i,j=0}^{1}\sqrt{p_{ij}}\,\ket{i}_A\ket{j}_B\ket{\phi_{ij}}_{A'B'}.
\label{supp:eq:general-purification}
\end{equation}
Because tracing out $A'B'$ must give the diagonal state $\rho_P$, the ancillary kets attached to distinct
occupied classical outcomes are orthonormal:
\begin{equation}
\braket{\phi_{k\ell}}{\phi_{ij}}
=\delta_{ik}\delta_{j\ell}
\end{equation}
whenever the corresponding probabilities are nonzero.
Let $q_1\geq q_2\geq\cdots$ be the eigenvalues of its reduced state on $L=AA'$.
By Eq.~\eqref{eq:variational-q1}, it is enough to bound the overlap of $\ket{\Psi}$ with an arbitrary normalized product ket
$\ket{x}_L\ket{y}_R$.

Resolve these two test kets according to the classical bits $A$ and $B$:
\begin{align}
\ket{x}_L&=\ket{0}_A\ket{x_0}_{A'}+\ket{1}_A\ket{x_1}_{A'},\\
\ket{y}_R&=\ket{0}_B\ket{y_0}_{B'}+\ket{1}_B\ket{y_1}_{B'}.
\end{align}
Normalization gives
\begin{equation}
\lVert x_0\rVert^2+\lVert x_1\rVert^2=1,
\qquad
\lVert y_0\rVert^2+\lVert y_1\rVert^2=1.
\end{equation}
Substituting these decompositions into Eq.~\eqref{supp:eq:general-purification}, the product overlap is
\begin{align}
\left|\bra{x}_L\bra{y}_R\Psi\rangle\right|
&=\left|\sum_{i,j=0}^{1}\sqrt{p_{ij}}\,
\bra{x_i}_{A'}\bra{y_j}_{B'}\phi_{ij}\rangle\right|\notag\\
&\leq\sum_{i,j=0}^{1}\sqrt{p_{ij}}\,\lVert x_i\rVert\,\lVert y_j\rVert.
\end{align}
The only inequality here is the triangle inequality together with
$|\bra{x_i}\bra{y_j}\phi_{ij}\rangle|\leq\lVert x_i\rVert\lVert y_j\rVert$,
since every $\ket{\phi_{ij}}$ is normalized.

Define two normalized logical kets by keeping only these four norms,
\begin{equation}
\ket{\widehat{x}}_L
:=\lVert x_0\rVert\ket{\bar0}+\lVert x_1\rVert\ket{\bar1},
\qquad
\ket{\widehat{y}}_R
:=\lVert y_0\rVert\ket{\bar0}+\lVert y_1\rVert\ket{\bar1}.
\end{equation}
The preceding bound and Eq.~\eqref{eq:canonical-purification} give
\begin{equation}
\left|\bra{x}\bra{y}\Psi\rangle\right|
\leq \bra{\widehat{x}}_L\bra{\widehat{y}}_R\Psi_{\mathrm{can}}\rangle.
\end{equation}
The right-hand side is nonnegative because all coefficients in Eq.~\eqref{eq:canonical-purification} are nonnegative. Applying
the same variational characterization, Eq.~\eqref{eq:variational-q1}, to the canonical purification gives
\begin{equation}
\bra{\widehat{x}}\bra{\widehat{y}}\Psi_{\mathrm{can}}\rangle =\sum_{i,j=0}^{1}\sqrt{p_{ij}}\,\lVert x_i\rVert\,\lVert y_j\rVert\leq\sqrt{\lambda_+}.
\end{equation}
Since $\ket{x}$ and $\ket{y}$ were arbitrary, maximizing the overlap yields
\begin{equation}\label{app eq: q1<lambda+}
q_1\leq\lambda_+.
\end{equation}
\section{Binary classical--quantum states}
\label{app:cq}

We next replace one classical subsystem by a quantum one while retaining the rank-two outer structure.  Consider
\begin{equation}
 \rho_{AB}=p\ket0\!\bra0_A\otimes\rho_0
 +(1-p)\ket1\!\bra1_A\otimes\rho_1,
 \label{supp:eq:cq}
\end{equation}
and write the squared fidelity as
\begin{equation}
 F(\rho_0,\rho_1)=\|\sqrt{\rho_0}\sqrt{\rho_1}\|_1^2.
\end{equation}
The candidate spectrum is encoded by the $2\times2$ matrix
\begin{equation}
 G=\begin{pmatrix}
 p&\sqrt{p(1-p)F(\rho_0,\rho_1)}\\
 \sqrt{p(1-p)F(\rho_0,\rho_1)}&1-p
 \end{pmatrix}.
 \label{supp:eq:G}
\end{equation}

\begin{proposition}
\label{prop:cq}
For every binary classical--quantum state $\rho_{AB}$ in
Eq.~\eqref{supp:eq:cq} and every $0\leq\alpha\leq\infty$,
\begin{equation}
 \EP{\alpha}(\rho_{AB})=S_\alpha(G).
 \label{eq:cqformula}
\end{equation}
\end{proposition}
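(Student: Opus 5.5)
The plan mirrors the proof of Theorem~\ref{thm:wootters}: exhibit a purification whose reduced spectrum across $AA':BB'$ equals the eigenvalues of $G$, and then show that this spectrum majorizes the reduced spectrum of every other purification.

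\emph{Achievability.} Let $\ket{\psi_0}_{BB'}$ and $\ket{\psi_1}_{BB'}$ be purifications of $\rho_0$ and $\rho_1$. By Uhlmann's theorem we may choose them with $\braket{\psi_0}{\psi_1}=\sqrt{F(\rho_0,\rho_1)}\ge0$. Set
\[
\ket{\Psi_U}=\sqrt p\,\ket0_A\ket0_{A'}\ket{\psi_0}_{BB'}+\sqrt{1-p}\,\ket1_A\ket1_{A'}\ket{\psi_1}_{BB'} .
\]
Tracing out $A'$ kills the cross terms, so $\Ket{\Psi_U}$ purifies $\rho_{AB}$. Its marginal on $L=AA'$ is supported on $\{\ket{00},\ket{11}\}$, and in that basis its Gram matrix is exactly $G$. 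Hence $S_\alpha$ of this marginal equals $S_\alpha(G)$, which gives $\EP{\alpha}(\rho_{AB})\le S_\alpha(G)$.

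\emph{Converse.} Write the eigenvalues of $G$ as $\lambda_\pm$. These sum to one, and the rank of the marginal above is at most two. So, as in Eq.~\eqref{eq: major}, it suffices to show that every purification $\ket\Psi$ has largest Schmidt weight $q_1\le\lambda_+$; Schur-concavity then settles all finite $\alpha$, the case $\alpha=0$ follows from the rank bound, and $\alpha=\infty$ is immediate.

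Any purification can be written as
\[
\ket\Psi=\sqrt p\,\ket0_A\ket{\Phi_0}_{A'BB'}+\sqrt{1-p}\,\ket1_A\ket{\Phi_1}_{A'BB'},
\]
where $\ket{\Phi_k}$ purifies $\rho_k$ and the two are orthogonal on $A'$, since the $A'$ marginal must carry the decoherence of the $A$ register. To bound the overlap with a test product $\ket x_L\ket y_R$, decompose $\ket x_L=\ket0_A\ket{x_0}_{A'}+\ket1_A\ket{x_1}_{A'}$. This gives
\[
\bra x\bra y\Psi\rangle=\sqrt p\,\langle x_0,y|\Phi_0\rangle+\sqrt{1-p}\,\langle x_1,y|\Phi_1\rangle .
\]
Each term has the form $\langle y|v_k\rangle$, where $\ket{v_k}:=(\bra{x_k}\otimes I)\ket{\Phi_k}$ is a vector on $R=BB'$. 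The reduced operator $\Tr_{B'}\ket{v_k}\!\bra{v_k}$ is dominated, up to the factor $\|x_k\|^2$, by $\rho_k$.

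For a single normalized $\ket y$ we therefore need to bound $|\sqrt p\,a_0+\sqrt{1-p}\,a_1|$, where $a_k=\langle y|v_k\rangle$. The Gram matrix of $(v_0,v_1)$ has diagonal entries at most $\|x_k\|^2$. Its off-diagonal entry $\langle v_0|v_1\rangle$ is at most $\|x_0\|\|x_1\|\sqrt{F}$ in modulus, by monotonicity of fidelity under the CP map $\bra{x_k}$ together with Uhlmann's theorem, after rescaling the subnormalized states by $\|x_k\|$. The maximum over $y$ is then the top eigenvalue of
\[
\begin{pmatrix}\sqrt p\,\|x_0\| & \\ & \sqrt{1-p}\,\|x_1\|\end{pmatrix}\,
\mathrm{Gram}(v)\,
\begin{pmatrix}\sqrt p\,\|x_0\| & \\ & \sqrt{1-p}\,\|x_1\|\end{pmatrix},
\]
evaluated against the all-ones direction. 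This is bounded by $\bigl(\sqrt p\,\|x_0\|,\sqrt{1-p}\,\|x_1\|\bigr)$ paired with the entrywise-dominating matrix whose off-diagonal entry is $\sqrt F$.

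Maximizing over $\|x_0\|^2+\|x_1\|^2=1$ yields the top eigenvalue of $G$. In other words, just as in Appendix~\ref{app sec: q1 < lambda+}, the overlap is dominated by the overlap of the Uhlmann purification with a product test state having nonnegative coefficients, which is at most $\sqrt{\lambda_+}$.

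\emph{Main obstacle.} The hard step is the off-diagonal bound $|\langle v_0|v_1\rangle|\le\|x_0\|\|x_1\|\sqrt F$, because $x_0$ and $x_1$ are different vectors acting on $A'$ in $\ket{\Phi_0}$ and $\ket{\Phi_1}$. I would try to handle it by writing $\ket{\Phi_k}=(V_k\otimes I)\ket{\sqrt{\rho_k}}$ with isometries $V_k$. Then $\langle v_0|v_1\rangle=\Tr\bigl(\sqrt{\rho_0}\,K_0^\dagger K_1\sqrt{\rho_1}\bigr)$ for operators $K_k$ with $\|K_k\|_\infty\le\|x_k\|$, and Hölder's inequality gives at most $\|x_0\|\|x_1\|\,\|\sqrt{\rho_0}\sqrt{\rho_1}\|_1$. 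This identification of $\langle v_0|v_1\rangle$ with a Hölder-type trace is the step to verify carefully.
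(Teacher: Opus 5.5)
Your argument is correct and has the same architecture as the paper's proof. Achievability uses the same Uhlmann purification with orthogonal flags on $A'$. The converse bounds only the largest eigenvalue, by splitting the test vector into its two classical branches. Maximizing your product overlap over $\ket y$ gives $\|\sqrt p\,v_0+\sqrt{1-p}\,v_1\|^2$. This is exactly the paper's $\bra u\Phi_{AA'}\ket u$, and your $\ket{v_k}$ are its $\ket{\zeta_x}$.

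The real difference is how the cross term is controlled. The paper keeps the test vectors on $A'$ and writes $\langle v_0|v_1\rangle=\bra{x_1}T\ket{x_0}$ with the transition operator $T=\Tr_{BB'}\ket{\Phi_1}\!\bra{\Phi_0}$. It then uses $\|T\|_\infty\le\|T\|_1=\sqrt{F(\omega_0,\omega_1)}$, which is Uhlmann for the $BB'$ marginals $\omega_k$, together with monotonicity of fidelity under $\Tr_{B'}$. You instead parametrize $\ket{\Phi_k}=(V_k\otimes I)\ket{\sqrt{\rho_k}}$. You absorb $\bra{x_k}$ into contractions $K_k=(\bra{x_k}\otimes I_{B'})V_k$ with $\|K_k\|_\infty\le\|x_k\|$, and apply H\"older against $\sqrt{\rho_0}\sqrt{\rho_1}$. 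The identification you flagged does hold. Depending on conventions, the $\sqrt{\rho_k}$ may appear complex-conjugated (transposed), but this leaves $\|\sqrt{\rho_0}\sqrt{\rho_1}\|_1=\sqrt{F(\rho_0,\rho_1)}$ unchanged. Your route is slightly more self-contained, since it needs neither the intermediate states $\omega_k$ nor a data-processing step. The paper's route packages the same content into two standard fidelity facts.

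Two side claims should be repaired, although your argument never uses them. First, $\ket{\Phi_0}$ and $\ket{\Phi_1}$ need not be orthogonal on $A'$. The purification condition is only $\Tr_{A'B'}\ket{\Phi_0}\!\bra{\Phi_1}=0$, and $B'$ alone can enforce it; for example, $A'$ could be trivial. Second, ``monotonicity of fidelity under the CP map $\bra{x_k}$'' does not apply as stated, because different maps act on the two branches. Your H\"older step is what actually closes this. Your own observation $\Tr_{B'}\ket{v_k}\!\bra{v_k}\le\|x_k\|^2\rho_k$ would also close it. Combine it with Uhlmann's bound $|\langle v_0|v_1\rangle|\le\|\sqrt{\sigma_0}\sqrt{\sigma_1}\|_1$ for the subnormalized marginals $\sigma_k=\Tr_{B'}\ket{v_k}\!\bra{v_k}$. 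Then use that $\|\sqrt{X}\sqrt{Y}\|_1$ is monotone in the L\"owner order in each argument.
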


\subsection{A fidelity-saturating purification}

Uhlmann's theorem~\cite{Uhlmann1976} gives purifications $\ket{\psi_0},\ket{\psi_1}$ of $\rho_0,\rho_1$ on a common $BB'$ system whose overlap is maximal:
\begin{equation}
 |\braket{\psi_0}{\psi_1}|=\sqrt{F(\rho_0,\rho_1)}.
\end{equation}
Attach orthogonal local flags for the two classical branches:
\begin{equation}
 \ket\Psi
 =\sqrt p\ket0_A\ket0_{A'}\ket{\psi_0}_{BB'}
 +\sqrt{1-p}\ket1_A\ket1_{A'}\ket{\psi_1}_{BB'}
\end{equation}
This state purifies $\rho_{AB}$, and its $AA'$ marginal has the same nonzero spectrum as $G$  
\begin{align*}
   \Psi^{AA'}=p \ket{00}\!\bra{00}_{AA'}+\sqrt{p(1-p)F(\rho_0,\rho_1)}\ket{00}\!\bra{11}_{AA'} +\sqrt{p(1-p)F(\rho_0,\rho_1)}  \ket{11}\!\bra{00}_{AA'} +(1-p) \ket{11}\!\bra{11}_{AA'} 
\end{align*}
It therefore proves the achievable bound
\begin{equation}
 \EP{\alpha}(\rho_{AB})\le S_\alpha(G).
 \label{supp:eq:cqupper}
\end{equation}

\subsection{No competing purification can do better}

To prove optimality, resolve an arbitrary purification into the same two classical branches:
\begin{equation}
 \ket\Phi
 =\sqrt p\ket0_A\ket{\Phi_0}_{BA'B'}
 +\sqrt{1-p}\ket1_A\ket{\Phi_1}_{BA'B'},
\end{equation}
where $\ket{\Phi_x}$ purifies $\rho_x$.  Let
\begin{equation}
 \Phi_{AA'}=\Tr_{BB'}\ket\Phi\!\bra\Phi
\end{equation}
and denote its largest eigenvalue by $q_1$.  As before, it is enough to control this single eigenvalue because $G$ has rank two.  Its variational characterization is
\begin{equation}
 q_1=\max_{\|u\|=1}\bra u \Phi_{AA'}\ket u.
 \label{supp:eq:cqvariational}
\end{equation}
Resolve a normalized test ket into its classical branches,
\begin{equation}
 \ket u_{AA'}=\ket0_A\ket{u_0}_{A'}+\ket1_A\ket{u_1}_{A'},
 \qquad
 a_x=\|u_x\|,
 \qquad
 a_0^2+a_1^2=1,
\end{equation}
and project each purification branch onto the corresponding ancillary component:
\begin{equation}
 \ket{\zeta_x}_{BB'}
 =(\bra{u_x}_{A'}\otimes I_{BB'})\ket{\Phi_x}.
\end{equation}
If $\sigma_x=\Tr_{BB'}\ket{\Phi_x}\!\bra{\Phi_x}$, their norms obey
\begin{equation}
 \|\zeta_x\|^2=\bra{u_x}\sigma_x\ket{u_x}\le a_x^2.
 \label{supp:eq:zetanorm}
\end{equation}
The only remaining freedom lies in the overlap of these two right kets.  Introduce the complementary states and transition operator
\begin{equation}
 \omega_x=\Tr_{A'}\ket{\Phi_x}\!\bra{\Phi_x},
 \qquad
 T=\Tr_{BB'}\ket{\Phi_1}\!\bra{\Phi_0}.
\end{equation}
The trace-norm variational principle together with Uhlmann's theorem gives
\begin{equation}
 \|T\|_1=\sqrt{F(\omega_0,\omega_1)}.
\end{equation}
Moreover, $\Tr_{B'}\omega_x=\rho_x$.  Fidelity monotonicity under partial trace therefore implies
\begin{equation}
 F(\omega_0,\omega_1)\le F(\rho_0,\rho_1).
\end{equation}
Consequently the cross term is bounded solely by the original fidelity:
\begin{equation}
 \begin{aligned}
 |\braket{\zeta_0}{\zeta_1}|
 &=|\bra{u_1}T\ket{u_0}|\\
 &\le a_0a_1\|T\|_\infty
 \le a_0a_1\sqrt{F(\rho_0,\rho_1)}.
 \end{aligned}
 \label{supp:eq:zetacross}
\end{equation}

Combining the norm and overlap bounds in the variational expression yields
\begin{equation}
 \begin{aligned}
 q_1
 &\le p a_0^2+(1-p)a_1^2
 +2\sqrt{p(1-p)F(\rho_0,\rho_1)}\,a_0a_1\\
 &=\begin{pmatrix}a_0&a_1\end{pmatrix}
 G
 \begin{pmatrix}a_0\\a_1\end{pmatrix}
 \le\lambda_{\max}(G).
 \end{aligned}
 \label{supp:eq:cqlargest}
\end{equation}
Because both $G$ and $\Phi_{AA'}$ have trace one and $G$ has at most two nonzero eigenvalues, the single largest-eigenvalue bound implies the full majorization relation
\begin{equation}
 \operatorname{spec}(G)\succ\operatorname{spec}(\Phi_{AA'}).
 \label{supp:eq:cqmajorization}
\end{equation}
Schur concavity gives $S_\alpha(\Phi_{AA'})\ge S_\alpha(G)$ for every $0<\alpha<\infty$, while the same majorization relation directly supplies the rank and largest-eigenvalue endpoints.  Together with the fidelity-saturating construction, this proves Proposition~\ref{prop:cq}.

\section{How one sign creates nonadditivity} \label{app:nonadd}

We now use the exact one-copy optimum to prove Theorem~\ref{thm:nonadd}.  The proof mirrors the logic of the main text: identify the exact product marginal, compute the effect of one sign flip, and show that its leading response lowers every subunit R\'enyi entropy.  The rank endpoint is then handled by the same sign pattern at one special parameter value.  We keep the calculation in ket--bra notation so that the collective operation remains visible throughout.

For $0\le r\le1$, set $D=3+r^2$ and consider
\begin{equation}
 \rho_r
 =\frac{r^2\ket{00}\!\bra{00}+\ket{01}\!\bra{01}
 +\ket{10}\!\bra{10}+\ket{11}\!\bra{11}}{D}.
 \label{supp:eq:rhor}
\end{equation}

As in the main text, define logical kets across the EoP cut by
\begin{equation}
 \ket0_L=\ket0_A\ket0_{A'},\quad
 \ket1_L=\ket1_A\ket1_{A'},
 \qquad
 \ket0_R=\ket0_B\ket0_{B'},\quad
 \ket1_R=\ket1_B\ket1_{B'}.
 \label{supp:eq:logicalkets}
\end{equation}
The positive canonical purification then reads
\begin{equation}
 \ket{\psi_r}
 =\frac{r\ket0_L\ket0_R+\ket0_L\ket1_R
 +\ket1_L\ket0_R+\ket1_L\ket1_R}{\sqrt D}.
 \label{supp:eq:psir}
\end{equation}
Its left marginal is
\begin{align}
 \sigma_r
 =\frac1D\bigl[&(1+r^2)\ket0\!\bra0+2\ket1\!\bra1\nonumber\\
 &+(1+r)(\ket0\!\bra1+\ket1\!\bra0)\bigr].
 \label{supp:eq:sigmar}
\end{align}
Appendix~\ref{app sec: q1 < lambda+} certifies this candidate against \emph{all} purifications, not merely signed canonical ones.  Therefore
\begin{equation}
 \EP{\alpha}(\rho_r)=S_\alpha(\sigma_r)
 \qquad(0\le\alpha\le\infty).
 \label{supp:eq:onecopyr}
\end{equation}

\subsection{The signed two-copy purification}

For two copies, write
\begin{equation}
 \ket{ij}_L=\ket i_{L_1}\ket j_{L_2},\qquad
 \ket{ij}_R=\ket i_{R_1}\ket j_{R_2}.
\end{equation}
Start from the product canonical purification $\ket{\psi_r}^{\otimes2}$ and flip exactly one amplitude: the physical outcome $00$ in both copies.
\begin{align}
 \ket{\widetilde\Psi_r}
 ={}&\ket{\psi_r}^{\otimes2}
 -\frac{2r^2}{D}
 \ket{00}_{A_1B_1}\ket{00}_{A'_1B'_1}\nonumber\\[-0.2em]
 &\hspace{5.9em}\otimes
 \ket{00}_{A_2B_2}\ket{00}_{A'_2B'_2}.
 \label{supp:eq:signedpur}
\end{align}
The all-zero coefficient was $+r^2/D$ and is now $-r^2/D$; every other coefficient is untouched.  All sixteen squared amplitudes are therefore unchanged, so tracing $A'_1B'_1A'_2B'_2$ still gives exactly $\rho_r^{\otimes2}$.  The sign-flipped ket is an admissible collective purification.

Let
\begin{equation}
 \Sigma_r=\sigma_r^{\otimes2},
 \qquad
 \widetilde\Sigma_r=\Tr_R\ket{\widetilde\Psi_r}\!\bra{\widetilde\Psi_r}.
 \label{supp:eq:Sigmas}
\end{equation}
Only the $\ket{00}_R$ branch participates in the change.  Projecting the unsigned product purification onto that branch gives
\begin{equation}
 \ket{\chi_r}_L
 :={}_R\!\bra{00}\psi_r^{\otimes2}\rangle
 =\frac{r^2\ket{00}+r\ket{01}+r\ket{10}+\ket{11}}{D}.
 \label{supp:eq:chi}
\end{equation}
Using
\begin{equation}
 \ket{\widetilde\Psi_r}
 =\ket{\psi_r}^{\otimes2}-\frac{2r^2}{D}\ket{00}_L\ket{00}_R,
 \label{supp:eq:signedlogical}
\end{equation}
the change in the left marginal is
\begin{align}
 \widetilde\Sigma_r-\Sigma_r
 ={}&-\frac{2r^2}{D}
 \bigl(\ket{00}\!\bra{\chi_r}+\ket{\chi_r}\!\bra{00}\bigr)
 +\frac{4r^4}{D^2}\ket{00}\!\bra{00}.
 \label{supp:eq:deltachi}
\end{align}
Substitution of Eq.~\eqref{supp:eq:chi} reveals an exact cancellation of the diagonal term, leaving only coherences:
\begin{align}
 \widetilde\Sigma_r-\Sigma_r
 ={}&-\frac{2r^2}{D^2}
 (\ket{00}\!\bra{11}+\ket{11}\!\bra{00})\nonumber\\
 &-\frac{2r^3}{D^2}
 \bigl(\ket{00}\!\bra{01}+\ket{01}\!\bra{00}\nonumber\\[-0.2em]
 &\hspace{5.7em}+\ket{00}\!\bra{10}+\ket{10}\!\bra{00}\bigr).
 \label{supp:eq:exactDeltaDirac}
\end{align}
Equation~\eqref{supp:eq:exactDeltaDirac} isolates the entire collective effect of the witness and is the only two-copy calculation needed below.

\subsection{\texorpdfstring{Strict entropy gain for every fixed $0<\alpha<1$}{Strict entropy gain for every fixed 0 < alpha < 1}}

Fix $0<\alpha<1$ and write $D=3+r^2$.  Set
\begin{equation}
 \Delta_r:=\widetilde\Sigma_r-\Sigma_r.
\end{equation}
The exact expression for this difference is
\begin{align}
 \Delta_r
 ={}&-\frac{2r^2}{D^2}
 \bigl(\ket{00}\!\bra{11}+\ket{11}\!\bra{00}\bigr)
 \nonumber\\
 &-\frac{2r^3}{D^2}
 \bigl(
 \ket{00}\!\bra{01}+\ket{01}\!\bra{00}
 +\ket{00}\!\bra{10}+\ket{10}\!\bra{00}
 \bigr).
 \label{supp:eq:exact-delta-concavity}
\end{align}

For fixed $r$, define
\begin{equation}
 X_r(t):=\Sigma_r+t\Delta_r
 =(1-t)\Sigma_r+t\widetilde\Sigma_r,
 \qquad 0\leq t\leq1,
\end{equation}
and
\begin{equation}
 \phi_r(t):=\Tr X_r(t)^\alpha.
\end{equation}
For sufficiently small $r$, both $\Sigma_r$ and
$\widetilde\Sigma_r$ are full rank, because they converge to the
strictly positive state $\Sigma_0$.  Hence $X_r(t)$ is full rank for
every $t\in[0,1]$.

For $0<\alpha<1$, the function $X\mapsto\Tr X^\alpha$ is concave on
positive operators.  Therefore $t\mapsto\phi_r(t)$ is an ordinary
concave function.  Its graph lies below its tangent line at $t=0$:
\begin{equation}
 \phi_r(t)\leq \phi_r(0)+t\phi_r'(0),
 \qquad 0\leq t\leq1.
 \label{supp:eq:tangent-general}
\end{equation}
Evaluating at $t=1$ gives
\begin{equation}
 \Tr\widetilde\Sigma_r^\alpha
 \leq
 \Tr\Sigma_r^\alpha+\phi_r'(0).
 \label{supp:eq:tangent-one}
\end{equation}

For a differentiable path of positive-definite matrices $X(t)$, the
standard trace-derivative formula \cite[Lemma~7]{BhatiaJainLim2018}
gives
\begin{equation}
 \frac{d}{dt}\Tr X(t)^\alpha
 =
 \alpha\Tr\!\left[X(t)^{\alpha-1}X'(t)\right].
\end{equation}
This identity does not require $X(t)$ and $X'(t)$ to commute.
In finite dimensions $X'(t)$ is the ordinary entrywise derivative.  Since
$X_r'(t)=\Delta_r$, it yields
\begin{equation}
 \phi_r'(0)
 =\alpha\Tr\!\left(\Sigma_r^{\alpha-1}\Delta_r\right).
 \label{supp:eq:phi-prime}
\end{equation}

Write
\begin{equation}
 Q_r:=\sigma_r^{\alpha-1},
 \qquad
 u_r:=\bra0 Q_r\ket0,
 \qquad
 v_r:=-\bra0 Q_r\ket1.
 \label{supp:eq:uv}
\end{equation}
Because $\Sigma_r=\sigma_r\otimes\sigma_r$, we have
$\Sigma_r^{\alpha-1}=Q_r\otimes Q_r$.  Consequently,
\begin{align}
 &\Tr\!\left[
 \Sigma_r^{\alpha-1}
 \bigl(\ket{00}\!\bra{11}+\ket{11}\!\bra{00}\bigr)
 \right]
 \nonumber\\
 &\quad=
 \bra{11}(Q_r\otimes Q_r)\ket{00}
 +\bra{00}(Q_r\otimes Q_r)\ket{11}
 =2v_r^2,
 \label{supp:eq:r2-overlap}
\end{align}
where we used that $Q_r$ is real in the chosen basis.  Moreover,
\begin{align}
 &\Tr\!\left[
 \Sigma_r^{\alpha-1}
 \bigl(
 \ket{00}\!\bra{01}+\ket{01}\!\bra{00}
 +\ket{00}\!\bra{10}+\ket{10}\!\bra{00}
 \bigr)
 \right]
 \nonumber\\
 &\quad=-4u_rv_r.
 \label{supp:eq:r3-overlap}
\end{align}
Indeed, each of the four terms equals $-u_rv_r$; for example,
\begin{equation}
 \Tr\!\left[
 (Q_r\otimes Q_r)\ket{00}\!\bra{01}
 \right]
 =\bra{01}(Q_r\otimes Q_r)\ket{00}
 =-u_rv_r.
\end{equation}

Substituting Eqs.~\eqref{supp:eq:exact-delta-concavity},
\eqref{supp:eq:r2-overlap}, and \eqref{supp:eq:r3-overlap} into
Eq.~\eqref{supp:eq:phi-prime}, we obtain the exact identity
\begin{align}
 \phi_r'(0)
 &= -\frac{4\alpha r^2}{D^2}v_r^2
    +\frac{8\alpha r^3}{D^2}u_rv_r
 \nonumber\\
 &= -\frac{4\alpha r^2}{D^2}
 v_r\bigl(v_r-2ru_r\bigr).
 \label{supp:eq:phi-prime-exact}
\end{align}

It remains to determine the sign of the last expression.  At $r=0$,
$Q_0=\sigma_0^{\alpha-1}$ and
\begin{equation}
 u_0=\bra0\sigma_0^{\alpha-1}\ket0>0.
\end{equation}
If
\begin{equation}
 \mu_\pm=\frac{3\pm\sqrt5}{6}
\end{equation}
are the two eigenvalues of $\sigma_0$, then
\begin{equation}
 v_0
 =-\bra0\sigma_0^{\alpha-1}\ket1
 =\frac{\mu_-^{\alpha-1}-\mu_+^{\alpha-1}}
 {3(\mu_+-\mu_-)}>0.
 \label{supp:eq:v-zero}
\end{equation}
The strict inequality follows because $\alpha-1<0$ and
$\mu_+>\mu_->0$.  The quantities $u_r$ and $v_r$ are continuous at
$r=0$.  Hence $v_r$ remains bounded away from zero for sufficiently
small $r$, while $2ru_r$ tends to zero.  Therefore
\begin{equation}
 v_r>2ru_r
 \label{supp:eq:sign-condition}
\end{equation}
for every sufficiently small $r>0$.  Equations
\eqref{supp:eq:phi-prime-exact} and \eqref{supp:eq:sign-condition}
then imply $\phi_r'(0)<0$.

Using this in Eq.~\eqref{supp:eq:tangent-one}, we conclude that
\begin{equation}
 \Tr\widetilde\Sigma_r^\alpha
 <\Tr\Sigma_r^\alpha
\end{equation}
for every sufficiently small $r>0$.  Equivalently,
\begin{equation}
 \frac{\Tr\widetilde\Sigma_r^\alpha}
 {\Tr\Sigma_r^\alpha}<1.
\end{equation}
It follows that
\begin{align}
 S_\alpha(\widetilde\Sigma_r)-S_\alpha(\Sigma_r)
 &=\frac{1}{1-\alpha}
 \log_2\!\left(
 \frac{\Tr\widetilde\Sigma_r^\alpha}
 {\Tr\Sigma_r^\alpha}
 \right)
 <0.
\end{align}
Since $\Sigma_r=\sigma_r\otimes\sigma_r$, this gives
\begin{equation}
 S_\alpha(\widetilde\Sigma_r)
 <S_\alpha(\Sigma_r)
 =2S_\alpha(\sigma_r).
\end{equation}
Finally, $\ket{\widetilde\Psi_r}$ is an admissible purification of
$\rho_r^{\otimes2}$, while the one-copy result gives
$\EP{\alpha}(\rho_r)=S_\alpha(\sigma_r)$.  Therefore
\begin{equation}
 \EP{\alpha}(\rho_r^{\otimes2})
 \leq S_\alpha(\widetilde\Sigma_r)
 <2\EP{\alpha}(\rho_r),
\end{equation}
as claimed.

The required smallness of \(r\)
depends on \(\alpha\); in particular, since \(v_0\to0\) as
\(\alpha\uparrow1\), this argument is not uniform near \(\alpha=1\) and
does not address the von Neumann entropy.

\subsection{\texorpdfstring{The rank endpoint $\alpha=0$}{The rank endpoint alpha = 0}}

It remains to treat the endpoint $\alpha=0$.  In this case,
\begin{equation}
 S_0(\omega)=\log_2\operatorname{rank}\omega,
 \qquad
 \Tr\omega^0=\operatorname{rank}\omega,
 \label{supp:eq:renyizero}
\end{equation}
where $\omega^0$ is the projection onto the support of $\omega$.  Choose
\begin{equation}
 r=\sqrt2-1.
 \label{supp:eq:zerochoice}
\end{equation}
Then
\begin{equation}
 r^2+2r=1,
 \qquad
 \epsilon=1-r^2=2(\sqrt2-1),
 \qquad
 D=3+r^2=6-2\sqrt2.
 \label{supp:eq:zerorelations}
\end{equation}

First, write the one-copy canonical purification as
\begin{equation}
 \ket{\psi_r}
 =\ket{u_0}_L\ket0_R+\ket{u_1}_L\ket1_R,
 \label{supp:eq:onecopyconditional}
\end{equation}
where
\begin{equation}
 \ket{u_0}_L=\frac{r\ket0_L+\ket1_L}{\sqrt D},
 \qquad
 \ket{u_1}_L=\frac{\ket0_L+\ket1_L}{\sqrt D}.
 \label{supp:eq:onecopyconditionalkets}
\end{equation}
Since $r\neq1$, these two kets are linearly independent.  Therefore
\begin{equation}
 \operatorname{rank}\sigma_r=2,
 \qquad
 S_0(\sigma_r)=1.
 \label{supp:eq:onecopyzerorank}
\end{equation}
By the one-copy optimality result proved above,
\begin{equation}
 \EP{0}(\rho_\epsilon)=S_0(\sigma_r)=1.
 \label{supp:eq:onecopyzerovalue}
\end{equation}

We now calculate the rank obtained from the signed two-copy purification.
Decompose Eq.~\eqref{supp:eq:signedlogical} in the orthonormal basis on $R$ as
\begin{equation}
 \ket{\widetilde\Psi_r}
 =\sum_{i,j=0}^1\ket{\xi_{ij}}_L\ket{ij}_R.
 \label{supp:eq:signedconditionaldecomposition}
\end{equation}
The four conditional kets on $L$ are
\begin{align}
 D\ket{\xi_{00}}_L
 &=-r^2\ket{00}_L+r\ket{01}_L+r\ket{10}_L+\ket{11}_L,
 \label{supp:eq:xi00}\\
 D\ket{\xi_{01}}_L
 &=r\ket{00}_L+r\ket{01}_L+\ket{10}_L+\ket{11}_L,
 \label{supp:eq:xi01}\\
 D\ket{\xi_{10}}_L
 &=r\ket{00}_L+\ket{01}_L+r\ket{10}_L+\ket{11}_L,
 \label{supp:eq:xi10}\\
 D\ket{\xi_{11}}_L
 &=\ket{00}_L+\ket{01}_L+\ket{10}_L+\ket{11}_L.
 \label{supp:eq:xi11}
\end{align}
Tracing out $R$ gives
\begin{equation}
 \widetilde\Sigma_r
 =\sum_{i,j=0}^1\ket{\xi_{ij}}\!\bra{\xi_{ij}}.
 \label{supp:eq:signedsum}
\end{equation}
Consequently, the rank of $\widetilde\Sigma_r$ is the dimension of the span
of the four kets in Eqs.~\eqref{supp:eq:xi00}--\eqref{supp:eq:xi11}.

Define the nonzero ket
\begin{equation}
 \ket\eta_L
 =\ket{00}_L-\ket{01}_L-\ket{10}_L+\ket{11}_L.
 \label{supp:eq:etazero}
\end{equation}
Using $r^2+2r=1$, we find
\begin{align}
 D\langle\eta|\xi_{00}\rangle&=1-2r-r^2=0,\notag\\
 D\langle\eta|\xi_{01}\rangle&=r-r-1+1=0,\notag\\
 D\langle\eta|\xi_{10}\rangle&=r-1-r+1=0,\notag\\
 D\langle\eta|\xi_{11}\rangle&=1-1-1+1=0.
 \label{supp:eq:etaorthogonal}
\end{align}
Thus all four conditional kets lie in the three-dimensional subspace
orthogonal to $\ket\eta_L$, and hence
\begin{equation}
 \operatorname{rank}\widetilde\Sigma_r\leq3.
 \label{supp:eq:signedrankupperzero}
\end{equation}

The rank is in fact exactly three.  Suppose that
\begin{equation}
 a\ket{\xi_{01}}+b\ket{\xi_{10}}+c\ket{\xi_{11}}=0.
 \label{supp:eq:xilinearcombination}
\end{equation}
Comparing the coefficients of $\ket{11}$, $\ket{00}$, and $\ket{01}$,
respectively, gives
\begin{align}
 a+b+c&=0,\notag\\
 r(a+b)+c&=0,\notag\\
 ra+b+c&=0.
 \label{supp:eq:independencesystem}
\end{align}
Subtracting the first equation from the third gives $(r-1)a=0$, so $a=0$.
Subtracting the first equation from the second then gives $(r-1)b=0$, so
$b=0$, and the first equation gives $c=0$.  Therefore
$\ket{\xi_{01}},\ket{\xi_{10}},\ket{\xi_{11}}$ are linearly independent.
Together with Eq.~\eqref{supp:eq:signedrankupperzero}, this proves
\begin{equation}
 \operatorname{rank}\widetilde\Sigma_r=3.
 \label{supp:eq:signedrankexactzero}
\end{equation}

Since $\Sigma_r=\sigma_r^{\otimes2}$ and
$\operatorname{rank}\sigma_r=2$, we also have
\begin{equation}
 \operatorname{rank}\Sigma_r=4.
 \label{supp:eq:unsignedrankzero}
\end{equation}
Therefore the trace-power ratio at $\alpha=0$ is exactly
\begin{equation}
 \frac{\Tr\widetilde\Sigma_r^0}{\Tr\Sigma_r^0}
 =\frac{\operatorname{rank}\widetilde\Sigma_r}
 {\operatorname{rank}\Sigma_r}
 =\frac34<1.
 \label{supp:eq:zerotracepowerratio}
\end{equation}
Equivalently,
\begin{equation}
 S_0(\widetilde\Sigma_r)=\log_2 3
 <\log_2 4
 =S_0(\Sigma_r)=2S_0(\sigma_r).
 \label{supp:eq:zeroentropystrict}
\end{equation}
Since $\ket{\widetilde\Psi_r}$ is an admissible purification of
$\rho_\epsilon^{\otimes2}$, Eqs.~\eqref{supp:eq:onecopyzerovalue} and
\eqref{supp:eq:zeroentropystrict} give
\begin{equation}
 \EP{0}(\rho_\epsilon^{\otimes2})
 \leq S_0(\widetilde\Sigma_r)
 =\log_2 3
 <2
 =2\EP{0}(\rho_\epsilon).
 \label{supp:eq:zeroordernonadditivity}
\end{equation}

\section{\texorpdfstring{Why collective gains disappear for $\alpha\ge2$}{Why collective gains disappear for alpha >= 2}}
\label{app:additivity}

We now prove Theorem~\ref{thm:additivity}.  Below one, ket--bra notation exposed the effect of a single phase.  Above one, coefficient operators expose a different structure: minimizing R\'enyi entropy is equivalent to maximizing a Schatten norm.  Throughout this section, $\alpha$ denotes the R\'enyi order and $p=2\alpha$ denotes only the associated Schatten exponent.

Begin with any finite classical state
\begin{equation}
 \rho_{AB}=\sum_{x,y}r_{xy}\ket{xy}\!\bra{xy}
\end{equation}
Every purification has the flag decomposition
\begin{equation}
 \ket{\Psi}=\sum_{x,y}\sqrt{r_{xy}}\ket{xy}_{AB}\ket{\phi_{xy}}_{A'B'},
\end{equation}
where the occupied $\ket{\phi_{xy}}$ are orthonormal.  Expand each flag as $\ket{\phi_{xy}}=\sum_{a,b}(X_{xy})_{ab}\ket a\ket b$.  Their orthonormality becomes
\begin{equation}
 \Tr(X_{xy}^\dagger X_{x'y'})=\delta_{xx'}\delta_{yy'}.
 \label{supp:eq:HSorth}
\end{equation}
Across the EoP cut $AA':BB'$, the corresponding coefficient operator is the block matrix
\begin{equation}
 M(\rho_{AB},X)=\bigl[\sqrt{r_{xy}}X_{xy}\bigr]_{x,y},
 \label{supp:eq:coefmatrix}
\end{equation}
where $(X_{xy})_{ab}$ denotes the $(a,b)$ entry of the matrix $X_{xy}$.
The $AA'$ marginal is $MM^\dagger$.  The positive canonical purification is the scalar special case whose nonzero singular values are those of
\begin{equation}
 A(\rho_{AB})=\bigl[\sqrt{r_{xy}}\bigr]_{x,y}.
 \label{supp:eq:sqrtamp}
\end{equation}

\subsection{R\'enyi EoP as a maximal Schatten norm}

For $p\ge2$, collect the entire purification optimization into
\begin{equation}
 v_p(\rho_{RB})=\sup_X\|M(\rho_{AB},X)\|_p,
 \label{supp:eq:vpdef}
\end{equation}
where the supremum ranges over all ancillary dimensions and all Hilbert--Schmidt orthonormal families in Eq.~\eqref{supp:eq:HSorth}.  Since
\begin{equation}
 \Tr(MM^\dagger)^\alpha=\|M\|_{2\alpha}^{2\alpha},
\end{equation}
the negative prefactor $1/(1-\alpha)$ reverses the optimization for $\alpha>1$:
\begin{equation}
 \EP{\alpha}(\rho_{AB})
 =\frac{1}{1-\alpha}\log_2 v_{2\alpha}(R)^{2\alpha}.
 \label{supp:eq:EviaVp}
\end{equation}
The original entropy statement is now equivalent to multiplicativity of $v_p$.  The rest of the proof establishes exactly that multiplicativity when the outer state is a classical two-qubit state and $p\ge4$.

\subsection{Entrywise monotonicity in the scalar reduction}

After compressing operator blocks to scalar norms, we will need to enlarge the resulting nonnegative $2\times2$ matrix entry by entry.  The following lemma shows that this step can only increase its Schatten norm.

\begin{lemma}\label{app lemma: C<D entrwise}
Let $C,D$ be $2\times2$ scalar matrices with nonnegative entries and $0\le C\le D$ entrywise.  Then
\begin{equation}
 \|C\|_p\le\|D\|_p,
 \qquad p\ge2.
 \label{supp:eq:entrymon}
\end{equation}
\end{lemma}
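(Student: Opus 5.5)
The plan is to reduce the claim to a weak-majorization statement about squared singular values, and then apply a convex increasing function. Write $s_1\ge s_2\ge0$ for the singular values of $C$ and $S_1\ge S_2\ge0$ for those of $D$. Set $q:=p/2\ge1$. Then
\begin{equation*}
\|C\|_p^p=(s_1^2)^q+(s_2^2)^q,\qquad \|D\|_p^p=(S_1^2)^q+(S_2^2)^q ,
\end{equation*}
so it suffices to show that $(s_1^2,s_2^2)$ is weakly submajorized by $(S_1^2,S_2^2)$. Since $x\mapsto x^q$ is convex and nondecreasing on $[0,\infty)$ for $q\ge1$, the standard Tomić--Weyl (Karamata) inequality for weak submajorization then gives $\sum_i s_i^{2q}\le\sum_i S_i^{2q}$.

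Weak submajorization requires two inequalities.

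\emph{Sum of both entries.} We have
\begin{equation*}
s_1^2+s_2^2=\|C\|_2^2=\sum_{ij}C_{ij}^2\le\sum_{ij}D_{ij}^2=S_1^2+S_2^2 .
\end{equation*}
This holds because $0\le C_{ij}\le D_{ij}$ gives $C_{ij}^2\le D_{ij}^2$ entrywise.

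\emph{Largest entry.} Here I would show $s_1\le S_1$ using the variational formula $s_1=\max_{\|x\|=\|y\|=1}|x^{T}Cy|$, which is the scalar analogue of Eq.~\eqref{eq:variational-q1}. The point is that replacing $x,y$ by their entrywise absolute values can only increase $|x^TCy|$, because $C\ge0$. Hence the maximum is attained at nonnegative unit vectors, where $x^TCy=\sum_{ij}x_iC_{ij}y_j$ is entrywise nondecreasing in $C$. This yields
\begin{equation*}
s_1=\max_{x,y\ge0}x^TCy\le\max_{x,y\ge0}x^TDy=S_1 .
\end{equation*}
This is exactly the same absolute-value trick used in Appendix~\ref{app sec: q1 < lambda+}.

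With both inequalities in hand, the conclusion follows at once. The condition $p\ge2$ enters only through $q\ge1$, which is what makes $x^q$ convex. I do not expect a real obstacle. The only care needed is to state the majorization lemma in its weak (submajorization) form, since the total sums $s_1^2+s_2^2$ and $S_1^2+S_2^2$ differ, and this form requires the function to be increasing as well as convex.
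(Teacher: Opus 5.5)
Your proof is correct, but it takes a different route from the paper's.

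\textbf{The paper's route.} It passes to the Gram matrices $G_C=CC^{\mathsf T}\le G_D=DD^{\mathsf T}$ (entrywise) and interpolates $G(t)=G_C+t(G_D-G_C)$. It then shows that $\frac{d}{dt}\Tr G(t)^{p/2}=\frac p2\Tr\bigl[G(t)^{p/2-1}(G_D-G_C)\bigr]\ge 0$. This uses a $2\times2$-specific fact: $G(t)^{p/2-1}$ is entrywise nonnegative, because $x\mapsto x^{p/2-1}$ is nondecreasing and its off-diagonal entry therefore has the same sign as that of $G(t)$.

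\textbf{Your route.} You prove only the two endpoint cases. At $p=2$ the Frobenius norm is trivially monotone, and at $p=\infty$ the operator norm is monotone by the absolute-value trick. You then observe that for two singular values these are exactly the two conditions for weak submajorization of $(s_1^2,s_2^2)$ by $(S_1^2,S_2^2)$. Tomi\'c--Weyl with the convex nondecreasing map $x\mapsto x^{p/2}$ finishes the argument.

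\textbf{What each buys.} Your version needs no matrix calculus: there is no trace-derivative formula and no care about differentiability when $G(t)$ is singular. It also shows exactly where the $2\times2$ hypothesis enters: with more singular values, the intermediate Ky Fan conditions would not follow from the two endpoints. The paper's version instead reuses the trace-derivative machinery of Appendix~C and works directly with Gram (positive block) matrices, the same objects King's inequality acts on. In substance the two arguments prove the same thing. The paper's computation works verbatim for any smooth $f$ with $f'\ge0$ nondecreasing, so it gives $\Tr f(CC^{\mathsf T})\le\Tr f(DD^{\mathsf T})$ for every smooth convex nondecreasing $f$, which is equivalent to your weak-majorization statement.
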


\begin{proof}
    
Set $G_C=CC^{\mathsf T}$ and $G_D=DD^{\mathsf T}$.  Both are positive semidefinite and entrywise nonnegative, with $G_C\le G_D$.  Let $s=p/2\ge1$ and interpolate by $G(t)=G_C+t(G_D-G_C)$.  For a positive $2\times2$ matrix with nonnegative off-diagonal entry, $G(t)^{s-1}$ is also entrywise nonnegative: its off-diagonal entry has the same sign because $x^{s-1}$ is increasing.  Consequently,
\begin{equation}
 \frac{d}{dt}\Tr G(t)^s
 =s\Tr\left[G(t)^{s-1}(G_D-G_C)\right]\ge0.
\end{equation}
Integration from zero to one gives $\Tr G_C^s\le\Tr G_D^s$, which is Eq.~\eqref{supp:eq:entrymon}.  
\end{proof}

\subsection{\texorpdfstring{The $2\times2$ block norm compression}{The 2 x 2 block norm compression}}

The operator-to-scalar step begins with King's positive-block inequality~\cite{King2003}.  If
\begin{equation}
 H=\begin{pmatrix}X&Y\\Y^\dagger&Z\end{pmatrix}\ge0
\end{equation}
then, for $s\ge2$,
\begin{equation}
 \|H\|_s\le
 \left\|\begin{pmatrix}
 \|X\|_s&\|Y\|_s\\
 \|Y\|_s&\|Z\|_s
 \end{pmatrix}\right\|_s.
 \label{supp:eq:King}
\end{equation}
this inequality holds for every real $s\ge2$, not only for integer exponents.  It implies the compression inequality needed in the main text.

\begin{lemma}
For an arbitrary block matrix
\begin{equation}
 T=\begin{pmatrix}A&B\\C&D\end{pmatrix}
\end{equation}
and every real $p\ge4$,
\begin{equation}
 \|T\|_p\le
 \left\|\begin{pmatrix}
 \|A\|_p&\|B\|_p\\
 \|C\|_p&\|D\|_p
 \end{pmatrix}\right\|_p.
 \label{supp:eq:blockcompression}
\end{equation}
\end{lemma}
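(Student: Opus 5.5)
We would reduce the statement to King's positive-block inequality \eqref{supp:eq:King} applied to the Gram matrix $TT^\dagger$, with exponent $s=p/2\ge2$, and then use the entrywise monotonicity of Lemma~\ref{app lemma: C<D entrwise} to pass to the scalar matrix of block norms. Write $R_1=(A\;B)$ and $R_2=(C\;D)$ for the two block rows of $T$. Then
\begin{equation*}
 TT^\dagger=\begin{pmatrix}R_1R_1^\dagger&R_1R_2^\dagger\\ R_2R_1^\dagger&R_2R_2^\dagger\end{pmatrix}\ge0,
 \qquad
 \|T\|_p^2=\|TT^\dagger\|_s .
\end{equation*}
Applying \eqref{supp:eq:King} to this positive block matrix gives $\|TT^\dagger\|_s\le\|K\|_s$, where $K$ is the real symmetric $2\times2$ matrix
\begin{equation*}
 K=\begin{pmatrix}\|R_1R_1^\dagger\|_s&\|R_1R_2^\dagger\|_s\\ \|R_1R_2^\dagger\|_s&\|R_2R_2^\dagger\|_s\end{pmatrix}.
\end{equation*}

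Next we bound $K$ entrywise by $NN^{\mathsf T}$, where
\begin{equation*}
 N=\begin{pmatrix}\|A\|_p&\|B\|_p\\ \|C\|_p&\|D\|_p\end{pmatrix}.
\end{equation*}
For the diagonal entries, we have $R_1R_1^\dagger=AA^\dagger+BB^\dagger$ and $\|AA^\dagger\|_s=\|A\|_p^2$. The triangle inequality therefore gives $K_{11}\le\|A\|_p^2+\|B\|_p^2=(NN^{\mathsf T})_{11}$, and similarly $K_{22}\le(NN^{\mathsf T})_{22}$. For the off-diagonal entry, $R_1R_2^\dagger=AC^\dagger+BD^\dagger$. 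The triangle inequality together with H\"older's inequality, using $1/s=1/p+1/p$, gives
\begin{equation*}
 K_{12}\le\|A\|_p\|C\|_p+\|B\|_p\|D\|_p=(NN^{\mathsf T})_{12}.
\end{equation*}
We also note that $K$ is itself positive semidefinite. Indeed, H\"older applied to the factorization $R_1R_2^\dagger$ gives
\begin{equation*}
 K_{12}\le\|R_1\|_p\|R_2\|_p=\sqrt{K_{11}K_{22}},
\end{equation*}
and $K$ is entrywise nonnegative.

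Finally we need $\Tr K^s\le\Tr(NN^{\mathsf T})^s$. This is exactly the Gram-matrix statement established inside the proof of Lemma~\ref{app lemma: C<D entrwise}. We interpolate along $G(t)=K+t(NN^{\mathsf T}-K)$. This is a convex combination of two positive semidefinite, entrywise nonnegative matrices, so it stays in that class. As in that proof, $G(t)^{s-1}$ then has a nonnegative off-diagonal entry, so
\begin{equation*}
 \frac{d}{dt}\Tr G(t)^s=s\Tr\bigl[G(t)^{s-1}(NN^{\mathsf T}-K)\bigr]\ge0 .
\end{equation*}
Chaining the estimates yields
\begin{equation*}
 \|T\|_p^2\le\|K\|_s\le\|NN^{\mathsf T}\|_s=\|N\|_p^2 ,
\end{equation*}
which is Eq.~\eqref{supp:eq:blockcompression}.

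The main point needing care is this last monotonicity step. Lemma~\ref{app lemma: C<D entrwise} is stated for matrices $C\le D$ rather than for Gram matrices, and $K$ is not presented as $CC^{\mathsf T}$ for some nonnegative $C$ dominated by $N$. We therefore invoke its proof argument directly at the level of positive semidefinite, entrywise nonnegative $2\times2$ matrices. This is why we check positivity of $K$ explicitly, so that the interpolation path stays in the class where the derivative argument applies. Where $G(t)$ is singular, we handle the derivative by a small perturbation $G(t)+\epsilon I$ followed by $\epsilon\to0$. The restriction $p\ge4$ enters only through King's inequality, which requires $s=p/2\ge2$.
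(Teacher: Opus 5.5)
Your proof is correct and follows the paper's argument. You apply King's inequality to $TT^\dagger$ at $s=p/2$, bound the entries of $K$ by $NN^{\mathsf T}$ using the triangle and H\"older inequalities, and then run the interpolation argument from the entrywise-monotonicity lemma. Your explicit check that $K$ is positive semidefinite (via $K_{12}\le\|R_1\|_p\|R_2\|_p$), and your remark that the Gram-matrix form of that lemma's argument is what is needed rather than its literal statement, make precise two steps that the paper only asserts.
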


\begin{proof}

Set $s=p/2\ge2$ and first pass from $T$ to the positive block matrix
\begin{equation}
 TT^\dagger=
 \begin{pmatrix}
 AA^\dagger+BB^\dagger&AC^\dagger+BD^\dagger\\
 CA^\dagger+DB^\dagger&CC^\dagger+DD^\dagger
 \end{pmatrix}
 =:\begin{pmatrix}X&Y\\Y^\dagger&Z\end{pmatrix}.
\end{equation}
King's inequality compresses its four operator blocks to
\begin{equation}
 \|TT^\dagger\|_s\le\|K\|_s,
 \qquad
 K=\begin{pmatrix}\|X\|_s&\|Y\|_s\\\|Y\|_s&\|Z\|_s\end{pmatrix}.
\end{equation}
Abbreviate the norms of the original blocks by
\begin{equation}
 a=\|A\|_p,\quad b=\|B\|_p,
 \quad c=\|C\|_p,\quad d=\|D\|_p.
\end{equation}
The triangle inequality and Schatten H\"older bound the entries of $K$ as
\begin{align}
 \|X\|_s&\le a^2+b^2,\nonumber\\
 \|Z\|_s&\le c^2+d^2,\nonumber\\
 \|Y\|_s&\le ac+bd.
\end{align}
Thus $K$ is entrywise bounded by the Gram matrix
\begin{equation}
 L=\begin{pmatrix}
 a^2+b^2&ac+bd\\
 ac+bd&c^2+d^2
 \end{pmatrix}
 =\mathcal C\mathcal C^{\mathsf T},
 \qquad
 \mathcal C=\begin{pmatrix}a&b\\c&d\end{pmatrix}.
\end{equation}
Both $K$ and $L$ are positive semidefinite and entrywise nonnegative.  Lemma~\ref{app lemma: C<D entrwise} therefore gives $\|K\|_s\le\|L\|_s$, and hence
\begin{equation}
 \|T\|_p^2=\|TT^\dagger\|_s
 \le\|L\|_s=\|\mathcal C\|_p^2,
\end{equation}
which proves Eq.~\eqref{supp:eq:blockcompression}. 
\end{proof}

This is the $2\times2$ instance of Audenaert's norm-compression problem~\cite{Audenaert2008}.  The origin of the threshold is now explicit: King's theorem acts on $TT^\dagger$ at exponent $s=p/2$ and supplies the required direction only for $s\ge2$, equivalently $p\ge4$.

\subsection{Proof of strong additivity}

We now apply the compression lemma to the purification problem.  Let $P=(p_{ij})_{i,j=0}^1$ be the probability table of a classical two-qubit state and let $Q=(q_{k\ell})$ be an arbitrary finite distribution.  A completely general purification of $P\otimes Q$ has coefficient matrix
\begin{equation}
 M=\left[\sqrt{p_{ij}q_{k\ell}}X_{ik,j\ell}\right]_{(i,k),(j,\ell)}
 \label{supp:eq:jointM}
\end{equation}
where the leaf matrices $X_{ik,j\ell}$ form one Hilbert-Schmidt orthonormal family.  Grouping first by the indices $i,j$ exposes the required $2\times2$ outer structure:
\begin{equation}
 M=\begin{pmatrix}
 \sqrt{p_{00}}N_{00}&\sqrt{p_{01}}N_{01}\\
 \sqrt{p_{10}}N_{10}&\sqrt{p_{11}}N_{11}
 \end{pmatrix},
 \label{supp:eq:macroblock}
\end{equation}
with
\begin{equation}
 N_{ij}=\left[\sqrt{q_{k\ell}}X_{ik,j\ell}\right]_{k,\ell}.
\end{equation}
For each fixed $i,j$, the restricted family $\{X_{ik,j\ell}\}_{k,\ell}$ remains Hilbert-Schmidt orthonormal.  Thus every $N_{ij}$ is itself an admissible purification amplitude for $Q$. We provide a detailed representation of the two-copy coefficient matrix in the next subsection.  We obtain
\begin{equation}
 \|N_{ij}\|_p\le v_p(Q).
 \label{supp:eq:Nbound}
\end{equation}
For $p\ge4$, block compression removes the remaining operator structure, and entrywise monotonicity replaces each block norm by its common upper bound:
\begin{align}
 \|M\|_p
 &\le
 \left\|\begin{pmatrix}
 \sqrt{p_{00}}\|N_{00}\|_p&\sqrt{p_{01}}\|N_{01}\|_p\\
 \sqrt{p_{10}}\|N_{10}\|_p&\sqrt{p_{11}}\|N_{11}\|_p
 \end{pmatrix}\right\|_p\nonumber\\
 &\le v_p(Q)\left\|A(P)\right\|_p.
 \label{supp:eq:jointupper}
\end{align}
Taking the supremum over the original joint purification gives the upper bound
\begin{equation}
 v_p(P\otimes Q)\le \|A(P)\|_p v_p(Q).
 \label{supp:eq:vpupper2}
\end{equation}
When $Q$ is trivial, Eq.~\eqref{supp:eq:vpupper2} shows $v_p(P)\le\|A(P)\|_p$.  The canonical purification attains the reverse inequality, hence
\begin{equation}
 v_p(P)=\|A(P)\|_p.
\end{equation}
For the converse direction, tensor products of admissible purification amplitudes remain admissible, and Schatten norms multiply.  Taking amplitudes arbitrarily close to optimal gives
\begin{equation}
 v_p(P\otimes Q)\ge v_p(P)v_p(Q).
\end{equation}
Together with Eq.~\eqref{supp:eq:vpupper2}, this proves
\begin{equation}
 v_p(P\otimes Q)=v_p(P)v_p(Q),
 \qquad p\ge4.
 \label{supp:eq:vpmult}
\end{equation}
Finally set $p=2\alpha$ in Eq.~\eqref{supp:eq:EviaVp}.  The multiplicative norm becomes additive R\'enyi EoP, proving strong additivity for every finite real $\alpha\ge2$.

The min-entropy endpoint follows from the elementary operator-norm block estimate
\begin{equation}
 \left\|[T_{ij}]_{i,j=0}^1\right\|_\infty
 \le\left\|[\|T_{ij}\|_\infty]_{i,j=0}^1\right\|_\infty.
\end{equation}
Indeed, for a block vector $x=(x_0,x_1)$ and $u_j=\|x_j\|$, each output block-row norm is bounded by the corresponding component of $[\|T_{ij}\|_\infty]u$.  Repeating the reduction in Eqs.~\eqref{supp:eq:macroblock}--\eqref{supp:eq:vpmult} completes the endpoint and the proof of Theorem~\ref{thm:additivity}.

\subsection{Representation of the two-copy coefficient matrix}
To make the grouping transparent, suppose that \(Q\) also describes a
classical two-qubit state,
\begin{equation}
 \rho_Q^{KL}
 =
 \sum_{k,\ell=0}^{1}
 q_{k\ell}\ket{k\ell}\!\bra{k\ell}^{KL}.
\end{equation}
Let \(RR'\) denote the purifying system.  A completely general
purification of \(\rho_P\otimes\rho_Q\) can be grouped as
\begin{equation}
\begin{aligned}
 \ket{\Psi}
 &=
 \sum_{i,j=0}^{1}
 \sqrt{p_{ij}}\,
 \ket{i}^{A}\ket{j}^{B}
 \underbrace{\left(
 \sum_{k,\ell=0}^{1}
 \sqrt{q_{k\ell}}\,
 \ket{k}^{K}\ket{\ell}^{L}
 \ket{\phi^{ij}_{k\ell}}^{RR'}
 \right)}_{=:\,\ket{\psi_{ij}}^{KLRR'}} .
\end{aligned}
\label{eq:grouped-general-purification}
\end{equation}
For fixed $i,j$ the state $\ket{\psi_{ij}}^{KLRR'}$ is a purification of $\rho_Q^{KL}$.
The leaf states form one orthonormal family:
\begin{equation}
 \left\langle\phi^{i'j'}_{k'\ell'}
 \middle|
 \phi^{ij}_{k\ell}\right\rangle
 =
 \delta_{ii'}\delta_{jj'}
 \delta_{kk'}\delta_{\ell\ell'}.
\label{eq:orthonormal-leaves}
\end{equation}

Choose orthonormal bases
\(\{\ket{a}^{R}\}\) and \(\{\ket{b}^{R'}\}\), and write
\begin{equation}
 \ket{\phi^{ij}_{k\ell}}^{RR'}
 =
 \sum_{a,b}
 x^{ij}_{k\ell;ab}\,
 \ket{a}^{R}\ket{b}^{R'}.
\label{eq:leaf-expansion}
\end{equation}
We denote the coefficient matrix of this leaf state by
\begin{equation}
 X^{ij}_{k\ell}
 :=
 \left[x^{ij}_{k\ell;ab}\right]_{a,b}.
\label{eq:X-definition}
\end{equation}
Thus, the upper indices \(i,j\) identify the \(P\)-block, the lower
indices \(k,\ell\) identify the position inside that block, and \(a,b\)
are the row and column indices of the leaf matrix itself.

For fixed \(i,j\), the restricted family
\(\{\ket{\phi^{ij}_{k\ell}}\}_{k,\ell}\) remains orthonormal.  Therefore,
\begin{equation}
 \operatorname{Tr}_{RR'}
 \ket{\psi_{ij}}\!\bra{\psi_{ij}}
 =
 \sum_{k,\ell=0}^{1}
 q_{k\ell}\ket{k\ell}\!\bra{k\ell}^{KL}
 =
 \rho_Q^{KL},
\end{equation}
so \(\ket{\psi_{ij}}\) is a purification of \(\rho_Q\).

Across the bipartition \(KR:LR'\), the coefficient matrix of
\(\ket{\psi_{ij}}\) is
\begin{equation}
 N_{ij}
 =
 \begin{pmatrix}
  \sqrt{q_{00}}\,X^{ij}_{00}
  &
  \sqrt{q_{01}}\,X^{ij}_{01}
  \\[1mm]
  \sqrt{q_{10}}\,X^{ij}_{10}
  &
  \sqrt{q_{11}}\,X^{ij}_{11}
 \end{pmatrix}.
\label{eq:Nij-grouping}
\end{equation}
The block rows of \(N_{ij}\) are indexed by \(k=0,1\), while its block
columns are indexed by \(\ell=0,1\).  Since \(\ket{\psi_{ij}}\) is a
purification of \(\rho_Q\), every \(N_{ij}\) is an admissible
purification amplitude for \(Q\).

Finally, across the bipartition \(AKR:BLR'\), the coefficient matrix of
the full purification \(\ket{\Psi}\) is
\begin{equation}
 M
 =
 \begin{pmatrix}
  \sqrt{p_{00}}\,N_{00}
  &
  \sqrt{p_{01}}\,N_{01}
  \\[1mm]
  \sqrt{p_{10}}\,N_{10}
  &
  \sqrt{p_{11}}\,N_{11}
 \end{pmatrix}.
\label{eq:M-grouping}
\end{equation}
Here the block rows are indexed by \(i=0,1\), and the block columns are
indexed by \(j=0,1\).  At the level of scalar entries, the complete
indexing is
\begin{equation}
 M_{(i,k,a),(j,\ell,b)}
 =
 \sqrt{p_{ij}}\,
 \bigl(N_{ij}\bigr)_{(k,a),(\ell,b)}
 =
 \sqrt{p_{ij}q_{k\ell}}\,
 \bigl(X^{ij}_{k\ell}\bigr)_{ab}.
\label{eq:complete-index-check}
\end{equation}
Thus, \(M\) has a \(2\times2\) outer structure indexed by \(i,j\), and
each outer block \(N_{ij}\) has a \(2\times2\) inner structure indexed
by \(k,\ell\), exactly as required for the compression lemma.

\end{document}